\documentclass[11pt,a4paper]{article}
\usepackage[margin=1in]{geometry}

\usepackage{tikz}

\usepackage{lipsum}
\usepackage{amsfonts}
\usepackage{graphicx}
\usepackage{epstopdf}
\usepackage{algorithmic}
\ifpdf
  \DeclareGraphicsExtensions{.eps,.pdf,.png,.jpg}
\else
  \DeclareGraphicsExtensions{.eps}
\fi

\usepackage{enumitem}
\setlist[enumerate]{leftmargin=.5in}
\setlist[itemize]{leftmargin=.5in}

\usepackage{amsthm}
\newtheorem{theorem}{Theorem}
\newtheorem{proposition}{Proposition}
\newtheorem{corollary}{Corollary}

\newtheorem{definition}{Definition}

\usepackage{caption}
\usepackage{subcaption}
\usepackage{algorithm}

\usepackage{url}

\title{Random spanning tree Markov random field priors for Bayesian inverse problems in imaging}
\author{Jasper M. Everink}
\date{University of Eastern Finland, Kuopio, Finland}

\usepackage{amsopn}
\usepackage{amsmath}

\usepackage{float}

\usepackage{bm}
\renewcommand{\vec}[1]{\bm{#1}}

\begin{document}

\maketitle

\begin{abstract}
Markov random fields are common prior distributions used in Bayesian inverse imaging problems. In particular, difference priors assign probability distributions to differences between neighbouring pixels, such as Gaussian, Laplace, or Cauchy distributions. Depending on the chosen difference distribution, these priors have smoothing or edge-preserving properties. In this work, we propose a hyperprior on the connectivity graph of the pixel grid in the form of a random spanning tree, i.e., a random connected graph with the minimal number of edges, thereby coupling continuous and discrete random variables in the prior. By using random spanning trees, only a sparse random subset of edges is regularized, which helps preserve edges in the image with reduced contrast loss compared to standard difference-based Markov random fields. We discuss how fractal-like interfaces arise in high-resolution prior samples due to the random-tree connectivity. Finally, we propose a Gibbs sampler that alternates between the discrete tree updates and continuous pixel updates to efficiently explore the posterior distribution. We apply the method to various standard test image restoration problems, including denoising, deblurring, and inpainting, to study the impact of the proposed prior in comparison with existing Markov random fields.
\end{abstract}

\section{Introduction}\label{sec:introduction}

In the Bayesian formulation for finite-dimensional linear inverse problems \cite{kaipio2005statistical}, the observations $\vec{y} \in \mathbb{R}^m$ and unknown parameters $\vec{x} \in \mathbb{R}^n$ are assumed to be random vectors; they are related using a deterministic forward operator $A: \mathbb{R}^n \rightarrow \mathbb{R}^m$ and a random error $\vec{e} \in \mathbb{R}^m$ by the equation
\begin{equation}\label{eq:linear_inverse}
    \vec{y} = A\vec{x} + \vec{e}.
\end{equation}
To reconstruct $\vec{x}$, we can construct a posterior probability distribution on $\vec{x}$ given the random observations $\vec{y}$ by means of Bayes's rule,
\begin{equation}\label{eq:bayes}
    \pi(\vec{x}\,|\, \vec{y})\,\propto\, \pi(\vec{y}\,|\, \vec{x})\pi(\vec{x}),
\end{equation}
where the likelihood $\pi(\vec{y}\,|\, \vec{x})$ is derived from \eqref{eq:linear_inverse} and the prior $\pi(\vec{x})$ encodes any assumptions on the unknown $\vec{x}$ before any observations are considered. Similar to regularization in the variational formulation for inverse problems, the prior serves to penalize unwanted behaviour such as noise and promote desirable properties such as regularity and sparsity.

For imaging applications, many families of model-driven prior distributions have been studied. One such family is Markov random field (MRF) priors, which in this work are limited to difference priors defined by assigning independent, real-valued distributions to the differences between neighbouring pixels. This includes smoothness-promoting priors such as the Gaussian Markov random field (GMRF) priors \cite{rue2005gaussian}, also referred to as Gaussian free fields (GFFs) \cite{sheffield2007gaussian}. It also includes sparsity-promoting priors such as Laplace Markov random field (LMRF) priors \cite{bardsley2012laplace, siltanen2003statistical}, sometimes called total variation (TV) priors, Cauchy Markov random field (CMRF) priors \cite{markkanen2019cauchy, suuronen2022cauchy}, Student's t Markov random field priors \cite{senchukova2024bayesian}, and the horseshoe prior for edge-preserving inversion \cite{uribe2023horseshoe}. These sparsity-promoting priors often result in near-constant regions with discontinuous jumps between them. Strictly enforcing constant regions, whilst possible using spike-and-slab priors, can be very computationally expensive \cite{hans2007shotgun, schreck2015shrinkage}.

In general, MRFs are defined on graphs, and hence rely on the underlying graph structure. Graphs with simple structures, such as one-dimensional signals on line graphs, can help simplify the analysis and use of MRFs in Bayesian inversion. In particular, trees, i.e., graphs without loops, are well-studied and have many applications. For example, trees are used to infer structure in graphs \cite{duan2023bayesian}. They are also used to improve the efficiency of TV denoising \cite{kolmogorov2016total} and GMRF sampling \cite{brown2021sampling}. However, for Bayesian inverse problems in imaging, the graph is often assumed to be a grid of pixels and thereby fixed with limited flexibility.

The use of trees in prior distributions can lead to priors with rich structures. One natural way in which trees appear is in priors based on wavelet decompositions. In such priors, the hierarchical nature of wavelets naturally leads to a tree of coefficients. One such family is the Besov tree priors \cite{kekkonen2023random, kekkonen2026random}, which dynamically truncate the coefficients occurring in the wavelet representation of Besov priors \cite{horst2025uncertainty}. Deep branches in these Besov tree priors naturally correspond to highly detailed, fractal-like structures in the reconstruction.

Although grid-based MRFs are efficient, they lack adaptive or data-driven dependence between edges. This causes contrast loss if they are sufficiently strong in the case of light-tailed distributions, e.g., GMRFs and LMRFs. Alternatively, heavy-tailed distributions allow for larger jumps along edges, e.g., CMRFs, but result in distributions that can be multi-modal or not log-concave, which is possibly computationally inefficient. Meanwhile, existing tree-based approaches in Bayesian imaging are motivated from infinite dimensional priors defined on function-spaces and work on wavelet coefficients and not pixel adjacency.

\subsection{Contributions}
In this work, we propose the use of random spanning trees as a hyperprior on the graph structure of MRFs used in imaging. These random spanning trees enable efficient computation whilst providing a nontrivial probability distribution that can result in fractal-like interfaces in high-resolution images. Concretely, the major contributions in this work are as follows:

\begin{itemize}
    \item We introduce random spanning tree Markov random fields (RST-MRFs), which use random spanning trees as hyperpriors for standard Markov random field priors such as GMRFs, LMRFs, and CMRFs.
    \item We discuss how to efficiently sample from the associated posteriors using a Gibbs sampler that alternates between the continuous image variables and the discrete random spanning tree. We study the computational cost of the proposed methods and discuss strategies for balancing prior complexity and computational efficiency.
    \item We apply the RST-MRFs to standard test problems in Bayesian image reconstruction, including denoising, deblurring, and inpainting, to study the behaviour of the prior, its artifacts, and the associated computational cost.
\end{itemize}

\subsection{Overview}
This article is structured as follows. In Section \ref{sec:background}, we introduce the necessary notation and background on graphs, Markov random field priors, and random spanning trees needed to define and sample from the proposed RST-MRF priors. In Section \ref{sec:method}, we define the RST-MRF priors, present an algorithm for sampling from the posterior, examine fractal-like interfaces in high-resolution samples, and discuss various modeling and computational considerations. In Section \ref{sec:experiment}, we apply the RST-MRF priors and standard MRF priors to benchmark test problems to compare their behaviour. Finally, in Section \ref{sec:conclusions}, we provide some concluding remarks and discuss directions in which the RST-MRF priors can be further studied and developed.

\section{Background}\label{sec:background}

This section describes the necessary background on graphs, Markov random fields (including GMRFs, LMRFs and CMRFs), hierarchical modeling, and random spanning trees needed for the development of the method in Section \ref{sec:method}. Whilst the focus in this work is on imaging problems for Theorem \ref{thm:fractal_interface}, the method and all other theory presented can be applied to any inverse problem on graphs. Hence, we will use fully general notation.

\subsection*{Graph notation and conventions}
We denote a graph by $G = (V, E)$, where $V$ denotes the set of vertices and $E$ the set of edges between the vertices. We assume the graph is connected, undirected and simple, i.e., the edges are unordered pairs between distinct vertices. General vertices will be denoted by $v \in V$ unless otherwise named. Edges will be denoted by $e \in E$ if their endpoints are not required in the equations and $\{v_1, v_2\} \in E$ when the endpoints are used. Given a vector $\vec{x} \in \mathbb{R}^{|V|}$ with $|V|=n$ real values at the vertices, we denote by $x_v \in \mathbb{R}$ the scalar coefficient in $\vec{x}$ associated with vertex $v \in V$. 

We call a subgraph $T$ a spanning forest of $G$ if the vertex set of $T$ is $V$ and $T$ contains a subset of the edges $E$ such that $T$ does not contain any cycles. If $T$ is furthermore connected, we call $T$ a spanning tree. We denote the set of all trees in $G$ by $\mathcal{T}(G)$, which is non-empty due to the assumption that $G$ is connected.

When we consider edge-weighted graphs, we denote the weights by $\{w(e)\}_{e\in E}$ with positive weights $w(e) > 0$. In the case that a weight $w(e)$ might be zero, we restrict the problem to a subgraph with zero-weighted edges removed.

For images, we consider $V$ to be the set of all pixels and $E$ to denote the connection between neighbouring pixels.

\subsection{Markov Random Fields}
Formally, a Markov random field on a graph $G = (V, E)$ is a probability distribution on $\vec{x}$ with probability density such that for any two vertices $v_1, v_2 \in V$ such that $\{v_1, v_2\} \not\in E$, the coefficients $x_{v_1}$ and $x_{v_2}$ are independent when conditioned on all other coefficients of $\vec{x}$. Within this work, we use the more restrictive construction that is also called a difference prior.

\begin{definition}[Markov Random Field]\label{def:MRF}
    Let $G = (V, E)$ be a graph with edge-weights $\{w(e)\}_{e\in E}$, pick a root $r \in V$ and a positive root weight $\lambda > 0$. Let $\phi$ be the probability density of a zero-location, unimodal, unit-scale probability distribution. We define a Markov random field (MRF) on $G$ rooted at $r$ with difference densities $\phi$ by the proper probability density:
    \begin{equation}\label{eq:MRF_def}
        \pi(\vec{x}) \propto \phi\left(\lambda x_r\right)\prod_{\{v_1, v_2\} \in E} \phi\left(w(\{v_1, v_2\})(x_{v_1} - x_{v_2})\right).
    \end{equation}

    If unrooted, we consider the improper density:
    \begin{equation}\label{eq:MRF_def_unrooted}
        \pi(\vec{x}) \propto \prod_{\{v_1, v_2\} \in E} \phi\left(w(\{v_1, v_2\})(x_{v_1} - x_{v_2})\right).
    \end{equation}
    
    The important special cases in this article are:
    \begin{itemize}
        \item [\textbf{GMRF}]: $\phi(z) = (2\pi)^{-\frac{1}{2}}\exp\left(-\frac12 z^2\right)$ is the density of a standard Normal distribution.
        \item [\textbf{LMRF}]: $\phi(z) = \frac12\exp\left(-|z|\right)$ is the density of a standard Laplace distribution.
        \item [\textbf{CMRF}]: $\phi(z) = (\pi(1+z^2))^{-1}$ is the density of a standard Cauchy distribution.
    \end{itemize}
\end{definition}

\subsubsection*{Gaussian Markov Random Fields}
Of particular importance, both for modeling and computationally, is the GMRF. Given a graph $G = (V, E)$ with edge weights $\{w(e)\}_{e\in E}$ and assume an arbitrary direction of edges $\{v_1, v_2\} \in E$. We can assume this because differences will always occur as $(x_{v_1} - x_{v_2})^2$, see Definition \ref{def:MRF}. Then we define a finite difference matrix $D \in \mathbb{R}^{|E|\times|V|}$ by
\begin{equation*}
    D_{\{v_1, v_2\}, v} = \begin{cases}
    1, &\text{ if } v = v_1,\\
    -1, &\text{ if } v = v_2, \text{ and}\\
    0, &\text{ otherwise}.\\
    \end{cases}
\end{equation*}

Let $W = \text{diag}(\{w(e)\}_{e\in E}) \in \mathbb{R}_{> 0}^{|E|\times|E|}$ be a diagonal matrix of the positive edge weights. Then, the improper GMRF is given by the improper density
\begin{equation}\label{eq:implicit_GMRF}
    \pi(\vec{x}) \,\propto\, \exp\left( -\frac{1}{2}\vec{x}^\top D^\top W^2 D\vec{x}\right),
\end{equation}
where the associated precision matrix is known as the weighted graph Laplacian $L = D^TW^2D \in \mathbb{R}^{|V|\times|V|}$. It is straightforward to see that \eqref{eq:implicit_GMRF} cannot be normalized due to the all-ones vector lying in the null space of $D$, hence also in that of $L$. This is fixed by considering a proper GMRF rooted at a vertex $r$, which has proper density 
\begin{equation}\label{eq:rooted_GMRF}
    \pi(\vec{x}) \,\propto\, \exp\left( -\frac{1}{2}\vec{x}^\top(D^\top W^2D + \lambda^2 \vec{e}_r\vec{e}_r^\top)\vec{x}\right),
\end{equation}
where $e_v$ denotes the vector that is $1$ at vertex $v$ and $0$ elsewhere.

To incorporate hyperpriors on the edge weights $w$ in \eqref{eq:implicit_GMRF} or \eqref{eq:rooted_GMRF}, we need to know the proportionality not just with respect to $\vec{x}$, but also with respect to the edge weights $w$. It is common, e.g., in the horseshoe prior \cite{uribe2023horseshoe}, to assume independence of the random edge-weights, such that when conditioning on $\vec{x}$, each edge weight can be considered independently and the problem of computing the proportionality reduces to computing the normalization constant of a one-dimensional Gaussian. If we want to incorporate some dependence between edge weights $w$, we need to compute the full proportionality constant and its dependence on all edge weights $w$. Hence we need to know the determinant of the rooted weighted graph Laplacian. For this matrix and the precision matrix of \eqref{eq:rooted_GMRF}, the following classic result holds.

\begin{theorem}[The matrix-tree theorem, weighted version, see e.g. \cite{lyons2017probability}]\label{thm:matrix_tree_theorem}
Let $G = (V, E)$ be any graph with positive edge weights $\{w(e)\}_{e\in E}$ and denote by $L_v$ the matrix $L$ with the row and column associated with vertex $v\in V$ removed. Then for any $v \in V$,
    \begin{equation}\label{eq:matrix_tree_theorem}
        \det(L_v) = \sum_{T \in \mathcal{T}(G)} \prod_{e \in T}w(e)^2.
    \end{equation}

    Furthermore, when including a root $r\in V$ it holds that
    \begin{equation}\label{eq:matrix_tree_theorem_ext}
        \det(L + \lambda^2 e_re_r^T) = \lambda^2\sum_{T \in \mathcal{T}(G)} \prod_{e \in T}w(e)^2.
    \end{equation}
\end{theorem}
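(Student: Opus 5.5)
The plan is to prove the weighted matrix-tree theorem via the Cauchy--Binet formula applied to the factorization $L = D^\top W^2 D = (WD)^\top (WD)$. Fix $v \in V$ and let $B = WD$ with the column of vertex $v$ removed, so $B \in \mathbb{R}^{|E|\times(|V|-1)}$ and $L_v = B^\top B$. Cauchy--Binet gives
\begin{equation*}
    \det(L_v) = \sum_{S \subseteq E,\ |S| = |V|-1} \det(B_S)^2,
\end{equation*}
where $B_S$ is the square submatrix of rows indexed by $S$. Since $W$ is diagonal, $\det(B_S) = \prod_{e\in S} w(e)\,\det(D_{S,v})$, where $D_{S,v}$ is the reduced incidence submatrix. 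It then suffices to show that $\det(D_{S,v}) \in \{\pm 1\}$ when the edge set $S$ forms a spanning tree, and $\det(D_{S,v}) = 0$ otherwise. Squaring removes the sign and yields \eqref{eq:matrix_tree_theorem}.

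The key combinatorial step is this unimodularity claim, which I expect to be the main obstacle, though it is standard.

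\emph{Case where $S$ is not a spanning tree.} Since $|S| = |V|-1$, the subgraph $(V,S)$ must be disconnected. Take a connected component $C$ not containing $v$. The sum of the columns of $D_{S,v}$ indexed by $C$ is zero: each edge inside $C$ contributes $+1-1$, and no edge of $S$ leaves $C$. Hence the matrix is singular.

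\emph{Case where $S$ is a spanning tree.} Argue by induction on $|V|$. A tree with at least two vertices has at least two leaves, so there is a leaf $u \neq v$. Its column in $D_{S,v}$ has a single nonzero entry $\pm1$, in the row of its unique incident edge. Expanding the determinant along that column reduces to the reduced incidence matrix of the tree with $u$ deleted, which is a spanning tree of $G - u$ still containing $v$. By induction its determinant is $\pm 1$, with base case a single vertex and the empty determinant equal to $1$.

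For the rooted identity \eqref{eq:matrix_tree_theorem_ext}, note that $\lambda^2 e_r e_r^\top$ is a rank-one update touching only the $(r,r)$ entry. By multilinearity of the determinant in row $r$, write $\det(L + \lambda^2 e_re_r^\top) = \det(L) + \lambda^2 \det(L_r)$. The second term arises from replacing row $r$ by $\lambda^2 e_r^\top$ and expanding along that row. Since $L\mathbf{1} = 0$, we have $\det(L) = 0$. Applying \eqref{eq:matrix_tree_theorem} with $v = r$ then gives the claim. The result is also independent of the choice of $v$, which is consistent since the right-hand side of \eqref{eq:matrix_tree_theorem} does not depend on $v$.
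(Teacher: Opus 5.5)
Your proof is correct and complete. The paper does not prove this theorem; it states it as a classical result and cites \cite{lyons2017probability}, so there is no in-paper argument to compare yours against, and your argument stands on its own.

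Your route is the standard algebraic proof. You write $L_v = B^\top B$, with $B$ the weighted incidence matrix $WD$ minus the column of $v$, apply Cauchy--Binet, and show that the reduced incidence minors are $\pm 1$ on spanning trees and $0$ otherwise. The squared weights $w(e)^2$ in \eqref{eq:matrix_tree_theorem} fall out naturally from $L=(WD)^\top(WD)$. Both halves of the unimodularity step are right:
\begin{itemize}
\item the zero column-sum over a component $C$ not containing $v$ handles the non-tree case;
\item the leaf-expansion induction handles the tree case, and a leaf $u\neq v$ always exists.
\end{itemize}
The rooted identity via multilinearity in row $r$ together with $L\mathbf{1}=0$ is also correct. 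It is equivalent to the matrix determinant lemma $\det(L+\lambda^2 e_re_r^\top)=\det L+\lambda^2\,\mathrm{adj}(L)_{rr}$.

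Your argument also covers the degenerate case of a disconnected $G$ automatically, which the statement's ``any graph'' technically allows. Then every $(|V|-1)$-subset $S$ (if any exist) gives a vanishing minor, so both sides are $0$.
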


\begin{corollary}[GMRF normalization constant]\label{cor:GMRF_density}
    The density of the rooted GMRF on a graph $G = (V, E)$ is
    \begin{equation}\label{eq:GMRF_proportionality}
        \pi(\vec{x}) = (2\pi)^{-\frac{|V|}{2}}\lambda \left(\sum_{T \in \mathcal{T}(G)} \prod_{e \in T}w(e)^2\right)^{\frac{1}{2}}\exp\left( -\frac{1}{2}\vec{x}^T(D^T W^2D + \lambda^2 \vec{e}_r\vec{e}_r^T)\vec{x}\right).
    \end{equation}

    If $G$ is itself a tree, then this density simplifies to
    \begin{equation}\label{eq:GMRF_proportionality_tree}
        \pi(\vec{x}) = (2\pi)^{-\frac{|V|}{2}}\lambda \prod_{e \in G}w(e)\exp\left( -\frac{1}{2}\vec{x}^T(D^T W^2D + \lambda^2 \vec{e}_r\vec{e}_r^T)\vec{x}\right).
    \end{equation}
\end{corollary}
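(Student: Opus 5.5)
The plan is to read off the normalization constant of the Gaussian density \eqref{eq:rooted_GMRF} from the standard formula and then evaluate the determinant with Theorem \ref{thm:matrix_tree_theorem}. Write $Q = D^\top W^2 D + \lambda^2 \vec{e}_r\vec{e}_r^\top$ for the precision matrix. For any symmetric positive definite $Q \in \mathbb{R}^{n\times n}$, the Gaussian integral gives
\begin{equation*}
\int_{\mathbb{R}^n} \exp\left(-\tfrac12 \vec{x}^\top Q\vec{x}\right)\,d\vec{x} = (2\pi)^{\frac{n}{2}}\det(Q)^{-\frac12}.
\end{equation*}
Hence the normalized density is $(2\pi)^{-\frac{|V|}{2}}\det(Q)^{\frac12}\exp(-\frac12\vec{x}^\top Q\vec{x})$, and it only remains to justify positive definiteness and compute $\det(Q)$.

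\textbf{Positive definiteness.} This is the only point needing care. $Q$ is a sum of positive semidefinite matrices, so it suffices to show that its null space is trivial. Suppose $\vec{x}^\top Q\vec{x} = 0$. Then $\|WD\vec{x}\|^2 = 0$, and since every $w(e)>0$ we get $x_{v_1} = x_{v_2}$ for every edge $\{v_1,v_2\}\in E$. Because $G$ is connected, $\vec{x}$ is therefore constant. The term $\lambda^2 x_r^2 = 0$ with $\lambda>0$ then forces $\vec{x} = 0$. Alternatively, positive definiteness follows directly from the determinant formula below, since that determinant is strictly positive: $\mathcal{T}(G)$ is nonempty and all weights are positive.

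\textbf{General case.} Applying \eqref{eq:matrix_tree_theorem_ext} gives
\begin{equation*}
\det(Q) = \lambda^2 \sum_{T\in\mathcal{T}(G)}\prod_{e\in T} w(e)^2.
\end{equation*}
Taking the square root yields $\lambda\bigl(\sum_{T}\prod_{e\in T} w(e)^2\bigr)^{1/2}$, which is exactly \eqref{eq:GMRF_proportionality}.

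\textbf{Tree case.} If $G$ is itself a tree, then $\mathcal{T}(G) = \{G\}$, because the only spanning tree of a tree is the tree itself: removing any edge would disconnect it. The sum therefore collapses to the single term $\prod_{e\in G} w(e)^2$. Its square root is $\prod_{e\in G} w(e)$, since the weights are positive, and this gives \eqref{eq:GMRF_proportionality_tree}.

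Once the matrix-tree theorem is available, everything in this argument is routine.
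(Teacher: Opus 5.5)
Your proposal is correct and is exactly the argument the paper leaves implicit by presenting this as a corollary of Theorem~\ref{thm:matrix_tree_theorem}: the Gaussian normalizing constant $(2\pi)^{-|V|/2}\det(Q)^{1/2}$ is evaluated via \eqref{eq:matrix_tree_theorem_ext}, and the sum collapses to the single spanning tree $G$ when $G$ is a tree. Your positive-definiteness check, which uses connectivity of $G$, positive weights and $\lambda>0$, is a step the paper does not spell out but which the Gaussian integral formula requires.
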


\begin{figure}
    \centering
    \includegraphics[width=1.0\linewidth]{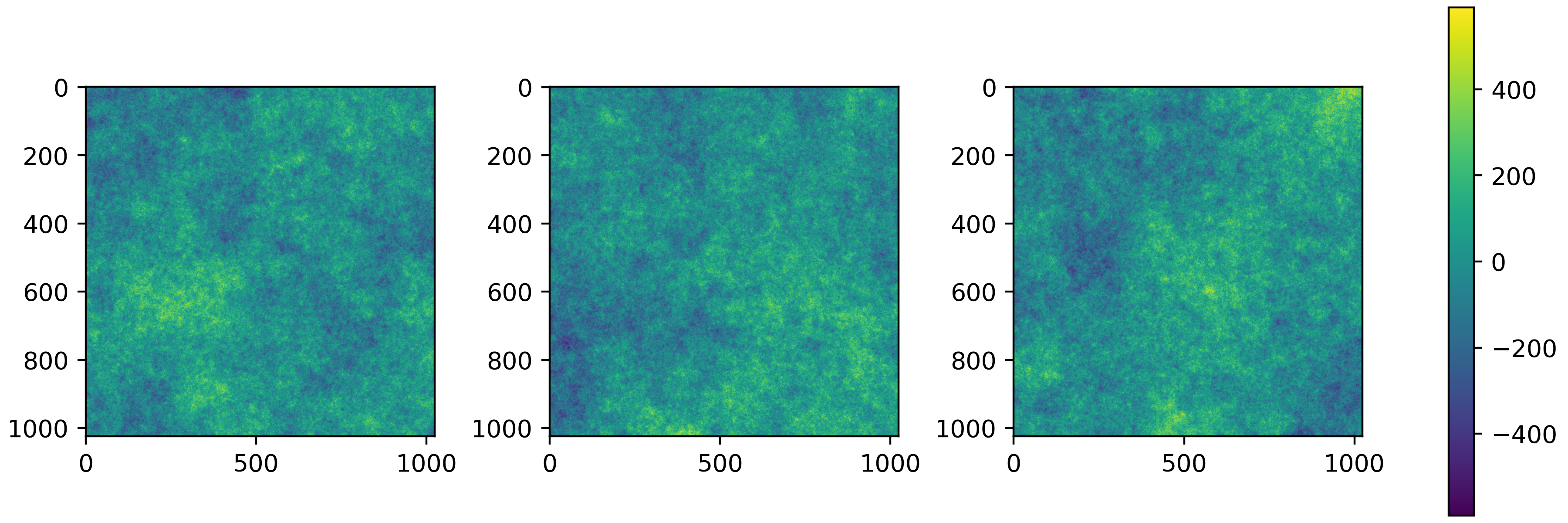}
    \caption{Samples of a GMRF with constant weights.}
    \label{fig:GMRF_samples}
\end{figure}

\subsubsection*{Laplace and Cauchy Markov Random Fields}
For the LMRF and CMRF, computing the normalization constant of a graph with cycles, like for \eqref{eq:GMRF_proportionality}, is intractable. However, for a rooted tree the densities admit a similar closed-form factorization to \eqref{eq:GMRF_proportionality_tree}.
\begin{equation}\label{eq:LMRF_proportionality_tree}
        \pi(\vec{x}) = \left(\frac12\right)^{|V|}\lambda\exp\left(-\lambda |x_r|\right)\prod_{\{v_1, v_2\} \in E} w(\{v_1, v_2\})\exp\left(-w(\{v_1, v_2\})|x_{v_1} - x_{v_2}|\right),
\end{equation}
for the LMRF and 
\begin{equation}\label{eq:CMRF_proportionality_tree}
        \pi(\vec{x}) = \left(\frac{1}{\pi}\right)^{|V|}\lambda \left( 1 + \lambda^2x_r^2\right)^{-1}\prod_{\{v_1, v_2\} \in E} w(\{v_1, v_2\})\left(1 + w(\{v_1, v_2\})^2(x_{v_1} - x_{v_2})^2\right)^{-1},
\end{equation}
for the CMRF. These normalization structures naturally suggest describing a prior structure in terms of random spanning trees, which we review in Subsection \ref{subsec:RST}.

If the prior is Gaussian, like a GMRF, the forward operator is linear, and the additive error is Gaussian, then the posterior distribution is a Gaussian distribution, for which many efficient sampling algorithms exist, see \cite{vono2022high}. However, this does not hold for the LMRF and CMRF priors. For these priors, the prior can be transformed into a hierarchical prior by introducing auxiliary random variables, such that the prior is Gaussian conditioned on these newly introduced random variables. Concretely, let $z$ represent the random difference between two neighbouring pixels, then we can use the following scale-mixture representations with scale $\gamma$, see \cite{flock2025continuous, senchukova2024bayesian, gelman2006prior}:
\begin{align}
    z \sim \text{Laplace}(0, \gamma) &\text{ is equivalent to } z|\tau \sim \mathcal{N}(0, \tau) \text{ and } \tau \sim \text{Exp}\left(\frac{\gamma^2}{2}\right), \text{ and} \label{eq:scale-mixture_laplace}\\
    z \sim \text{Cauchy}(0, \gamma) &\text{ is equivalent to } z|\tau\sim \mathcal{N}(0, \tau) \text{ and } \tau \sim \text{IG}\left(\frac{1}{2}, \frac{\gamma^2}{2}\right). \label{eq:scale-mixture_cauchy}
\end{align}

For efficient computation by Gibbs sampling, we will also make use of the converse conditionals, i.e., $\tau \,|\, z$. For the Laplace scale-mixture representation, this gives the following relation \cite{park2008bayesian}:
\begin{equation}
    \tau|z \sim \text{InvGaussian}\left(\frac{\gamma}{|z|}, \gamma^2\right),
\end{equation}
and for the Cauchy scale-mixture representation \cite{senchukova2024bayesian}:
\begin{equation}
    \tau^{-1}|z \sim \text{Gamma}\left(1, \frac{z^2 + \gamma^2}{2}\right).
\end{equation}

\subsection{Random Spanning Trees}\label{subsec:RST}
In this subsection, we consider the problem of sampling random subgraphs from a graph $G = (V, E)$. In the simplest model each edge is removed independently with some probability, which is equivalent to the Bernoulli/bond percolation model studied in percolation theory \cite{shante1971introduction}. To introduce correlation between edges, we consider random spanning trees, a few samples of which are shown in Figure \ref{fig:RST_examples} and an extensive study can be found in \cite{lyons2017probability}.

\begin{definition}[weighted random spanning tree (WRST)]\label{def:RST}
    Given a graph $G=(V, E)$ with edge weights $\{w(e)\}_{e\in E}$, we define a random spanning tree $T$ on $G$ by the probability
    \begin{equation}\label{eq:weights_tree}
        \mathbb{P}(T = \tilde{T}) \, \propto \, \prod_{e\in \tilde{T}} w(e),
    \end{equation}
    where the normalization constant is given by $\det(L_v)$ for $L_v = D^TWD$ with the row and column  corresponding to any $v\in V$ removed, by the matrix-tree Theorem \ref{thm:matrix_tree_theorem}.
\end{definition}

\begin{figure}
    \centering
    
    \begin{subfigure}{1.0\textwidth}
        \hspace{5em}
        \begin{tikzpicture}[scale=1.2,
    vertex/.style={circle, draw, fill=white, inner sep=2pt},
    gridedge/.style={gray!40, line width=0.6pt},
    treeedge/.style={black, line width=1.4pt}
]

% Vertices
\foreach \x in {0,1,2} {
  \foreach \y in {0,1,2} {
    \node[vertex] (\x\y) at (\x,\y) {};
  }
}

% Full grid edges (background)
\foreach \x in {0,1,2} {
  \foreach \y in {0,1} {
    \draw[gridedge] (\x\y) -- (\x\the\numexpr\y+1\relax);
    \draw[gridedge] (\y\x) -- (\the\numexpr\y+1\relax\x);
  }
}

% Random spanning tree edges (foreground)
\draw[treeedge] (00) -- (01);
\draw[treeedge] (01) -- (02);
\draw[treeedge] (02) -- (12);
\draw[treeedge] (12) -- (11);
\draw[treeedge] (11) -- (10);
\draw[treeedge] (10) -- (20);
\draw[treeedge] (20) -- (21);
\draw[treeedge] (21) -- (22);

\end{tikzpicture}\hfill
        \begin{tikzpicture}[scale=1.2,
    vertex/.style={circle, draw, fill=white, inner sep=2pt},
    gridedge/.style={gray!40, line width=0.6pt},
    treeedge/.style={black, line width=1.4pt}
]

% Vertices
\foreach \x in {0,1,2} {
  \foreach \y in {0,1,2} {
    \node[vertex] (\x\y) at (\x,\y) {};
  }
}

% Full grid edges (background)
\foreach \x in {0,1,2} {
  \foreach \y in {0,1} {
    \draw[gridedge] (\x\y) -- (\x\the\numexpr\y+1\relax);
    \draw[gridedge] (\y\x) -- (\the\numexpr\y+1\relax\x);
  }
}

% Random spanning tree edges (foreground)
\draw[treeedge] (00) -- (01);
\draw[treeedge] (01) -- (02);
\draw[treeedge] (20) -- (21);
\draw[treeedge] (21) -- (22);
\draw[treeedge] (00) -- (10);
\draw[treeedge] (10) -- (20);
\draw[treeedge] (12) -- (22);
\draw[treeedge] (12) -- (11);

\end{tikzpicture}\hfill
        \begin{tikzpicture}[scale=1.2,
    vertex/.style={circle, draw, fill=white, inner sep=2pt},
    gridedge/.style={gray!40, line width=0.6pt},
    treeedge/.style={black, line width=1.4pt}
]

% Vertices
\foreach \x in {0,1,2} {
  \foreach \y in {0,1,2} {
    \node[vertex] (\x\y) at (\x,\y) {};
  }
}

% Full grid edges (background)
\foreach \x in {0,1,2} {
  \foreach \y in {0,1} {
    \draw[gridedge] (\x\y) -- (\x\the\numexpr\y+1\relax);
    \draw[gridedge] (\y\x) -- (\the\numexpr\y+1\relax\x);
  }
}

% Random spanning tree edges (foreground)
\draw[treeedge] (00) -- (10);
\draw[treeedge] (10) -- (20);
\draw[treeedge] (01) -- (11);
\draw[treeedge] (11) -- (21);
\draw[treeedge] (02) -- (12);
\draw[treeedge] (12) -- (22);
\draw[treeedge] (10) -- (11);
\draw[treeedge] (11) -- (12);

\end{tikzpicture}
        \hspace{5em}
        \caption{Three samples of a random spanning tree on a 3-by-3 grid.}
        \label{fig:RST_examples:smoll}
    \end{subfigure}
    
    \begin{subfigure}{1.0\textwidth}
    
    \centering
        \includegraphics[width=0.8\textwidth]{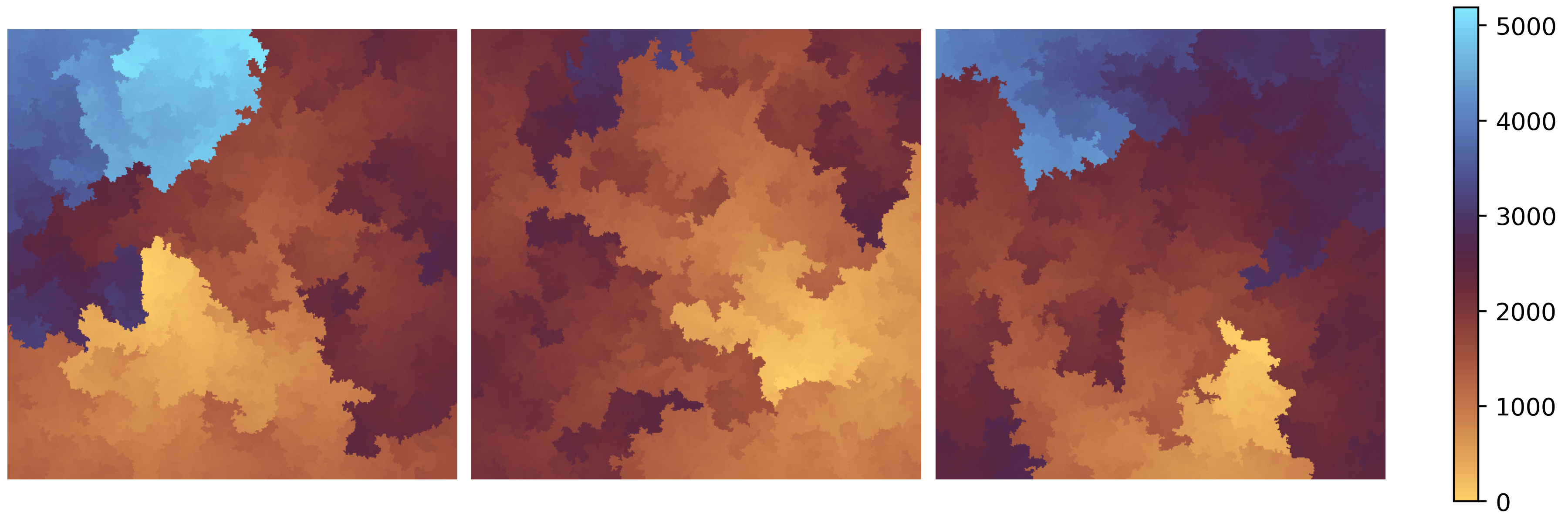}
        \caption{Three samples of a random spanning tree on a 512-by-512 grid. The color represents the graph distance to a randomly chosen root vertex.}
        \label{fig:RST_examples:large}
    \end{subfigure}
    
    \caption{Examples of small and large scale random spanning trees.}
    \label{fig:RST_examples}
\end{figure}

\subsubsection*{Sampling from random spanning trees}
Finding the tree of maximum probability is a well-studied problem in computer science, with deterministic algorithms such as Prim's algorithm \cite{prim1957shortest} and Kruskal's algorithm \cite{kruskal1956shortest} having a runtime of $\mathcal{O}(|E|\log|V|)$. Sampling from random spanning trees has found popular use in generating random mazes. Classical algorithms for exact sampling from weighted random spanning trees, as defined in Definition \ref{def:RST}, are the Aldous-Broder algorithm \cite{aldous1990random, broder1989generating} and the generally faster Wilson's algorithm, see \cite{wilson1996generating, propp1998get} or a modern discussion in \cite[Section 4.1]{lyons2017probability}.

Wilson's method is a stochastic algorithm based on loop-erased random walks. We consider a random walk on $G$ where the probability of going from one vertex $v_o$ to its neighbor $v_n$ is proportional to the weight $w(\{v_o, v_n\})$ of the edge connecting $v_o$ to $v_n$. We keep track of the trajectory of the random walk and when the random walk revisits a vertex on its own trajectory, the newly introduced loop is erased, hence a loop-erased random walk. To construct the random spanning tree, we pick an arbitrary initial vertex as the root of the tree. Then, start from any vertex that has yet to be included in the tree and perform a loop-erased random walk until the tree is hit. Then add the loop-erased trajectory to the tree and start a walk from a new vertex. Repeating this until all vertices have been added results in a random tree according to the weights \eqref{eq:weights_tree}.

Due to the stochastic nature of the loop erasure, the algorithm does not have a fixed runtime upper bound. However it is known that the expected runtime is $\mathcal{O}(\text{cover time of }G)$, where the cover time is the expected number of steps a random walk needs to visit every vertex in the graph. Wilson's algorithm is relatively efficient for uniform spanning trees, i.e., where all edge weights are equal, with an expected runtime bounded by $\mathcal{O}(|E||V|)$ for general graphs \cite{aleliunas1979random} and $\mathcal{O}(|V|\log|V|)$ for a 2D grid \cite{aldous1989lower, aldous-fill-2014}, i.e., an image. For non-uniform random spanning trees, it is possible that two regions of the graph are only connected by edges with relatively low weights, causing the random walk to only jump between regions after a very long time, see Figure \ref{fig:RST_bottleneck_example}. This can increase the runtime by a factor $\frac{w_{\max}}{w_{\min}}$, where $w_{\max}$ and $w_{\min}$ are the largest and smallest edge weight respectively. This would make Wilson's algorithm rather slow in these cases. Whilst recent progress has been made on circumventing such bottlenecks, see \cite{tam2025exact}, we propose in Section \ref{sec:method} to modify the graph to better control the runtime.

\begin{figure}
    \centering
    \hspace{5em}
    \begin{tikzpicture}[scale=1.2,
    vertex/.style={circle, draw, fill=white, inner sep=2pt},
    gridedge/.style={gray!40, line width=0.6pt},
    treeedge/.style={black, line width=1.0pt}
]

% Number of points
\def\N{4}

% Vertices
\foreach \x in {0,...,\numexpr\N-1\relax} {
  \node[vertex] (\x) at (0,\x) {};
}

% Edges
\foreach \x in {0,...,\numexpr\N-2\relax} {
  \draw[gridedge] (\x) -- (\the\numexpr\x+1\relax);
}

\node[] at (-0.4,0.5) {$1$};
\node[] at (-0.4,1.5) {$10^{-8}$};
\node[] at (-0.4,2.5) {$1$};

\draw[treeedge, ->] (1, -0.5) -> (2, -0.5);
\node[] at (1.5, -0.25) {time};

\draw[treeedge] (1, 3) node[vertex] {}
            -- (1.5, 2) node[vertex] {}
            -- (2, 3) node[vertex] {}
            -- (2.5, 2) node[vertex] {}
            -- (3, 3) node[vertex] {}
            -- (3.5, 2) node[vertex] {};
\node[] at (4,2.5) {\dots};
\draw[treeedge] (4.5, 3) node[vertex] {}
                -- (5, 2) node[vertex] {}
                -- (5.5, 3) node[vertex] {}
                -- (6, 2) node[vertex] {}
                -- (6.5, 1) node[vertex] {}
                -- (7, 0) node[vertex] {}
                -- (7.5, 1) node[vertex] {};

\end{tikzpicture}
    \hspace{5em}
    \caption{Example of a random walk on a weighted graph with a bottleneck.}
    \label{fig:RST_bottleneck_example}
\end{figure}

\section{Method}\label{sec:method}

In this section, we combine the random spanning trees with the Markov random field priors, resulting in the random spanning tree Markov random fields (RST-MRFs). We derive a conjugacy relation and an efficient Gibbs sampler for exploring the posterior distribution. We also discuss why fractal-like interfaces appear in high-resolution samples of the prior and how to improve the efficiency of weighted random spanning tree sampling in the case of bottlenecks.

\subsection{Random spanning tree hyperprior}
For the RST-MRF prior, we construct a hierarchical prior by combining the MRF from \eqref{eq:MRF_def} with the weighted random spanning tree model from \eqref{def:RST}. We assume all the edges and the root in the MRF have the same scale parameter $\lambda > 0$. Then, the resulting joint prior over the continuous variable $\vec{x}$ and discrete tree $T$ takes the form:
\begin{align}
    \pi(\vec{x}\,|\, T) \, &\propto\,  \phi\left(\lambda x_r\right)\prod_{\{v_1, v_2\} \in T} \phi\left(\lambda (x_{v_1} - x_{v_2})\right), \label{eq_RST-MRF}\\
    \mathbb{P}(T = \tilde{T}) \, &\propto \, \prod_{e\in \tilde{T}} w(e). \nonumber
\end{align}
Without knowing anything about the connectivity in the image, we assume a uniform random spanning tree hyperprior, i.e., $w(e) = 1$ for all $e \in E$.

Note that in the Gaussian case, the RST-GMRF is a Gaussian mixture model, combining one Gaussian distribution for each spanning tree in the graph. In the rooted case, each conditional $\vec{x}\,|\, T$ is a proper probability distribution, hence the prior is proper. In the unrooted case, the negative log-densities $-\log(\pi(\vec{x}\,|\, T))$ are coercive on the subspace orthogonal to $\vec{1}$ for every tree $T$. Hence, under standard Gaussian likelihood assumptions, the posterior will be proper if and only if $\vec{1} \not\in \text{Null}(A)$, just like for ordinary GMRFs, LMRFs and CMRFs.

\begin{figure}
    \centering
    \includegraphics[width=1.0\linewidth]{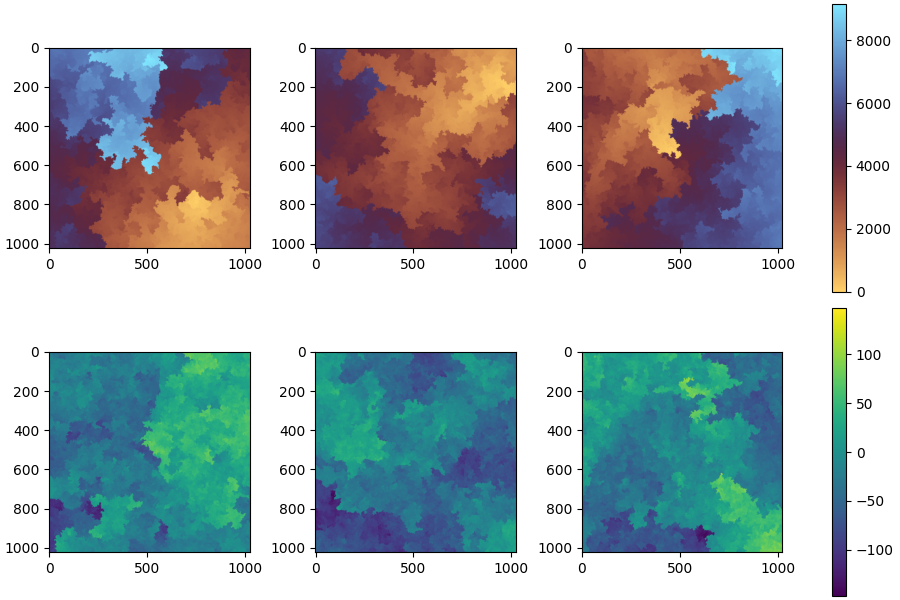}
    \caption{Three samples of uniform RST-GMRF, with the random spanning tree (top) and image sample (below.)}
    \label{fig:RST-GMRF_samples}
\end{figure}

The key to efficient sampling is the observation that weighted random spanning trees are conjugate prior distributions with respect to MRFs as defined in Definition \ref{def:MRF}. This means that $T\,|\, \vec{x}$ is itself also a weighted random spanning tree, making Gibbs sampling efficient.

\begin{theorem}[Random spanning tree hyperprior conjugacy]\label{thm:RST-conjugacy}
Let $G=(V,E)$ be a connected graph and let $\phi$ be a symmetric, zero-location, unimodal around $0$, unit-scale probability density. Consider the hierarchical RST-MRF prior defined by
\begin{align}
    \pi(\vec{x}\mid T)
    &\propto
    \phi(\lambda x_r)
    \prod_{\{v_1,v_2\}\in T}
    \phi\!\left(\lambda(x_{v_1}-x_{v_2})\right), \label{eq:thm_1}\\
    \mathbb{P}(T=\tilde T)
    &\propto
    \prod_{e\in\tilde T} w(e), \label{eq:thm_2}
\end{align}
with $\lambda>0$ and positive edge weights $\{w(e)\}_{e\in E}$.

Then, for fixed $\vec{x}$, the conditional distribution of the tree $T\mid\vec{x}$
is again a weighted random spanning tree on $G$, with edge weights
\begin{equation}
    \widetilde w(\{v_1,v_2\})
    \;=\;
    w(\{v_1,v_2\})\,
    \phi\!\left(\lambda(x_{v_1}-x_{v_2})\right).
\end{equation}
\end{theorem}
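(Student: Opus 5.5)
The plan is to apply Bayes' rule to the joint distribution of $(\vec{x},T)$ and to show that, once $\vec{x}$ is fixed, the conditional probability of $T$ is a product over the edges of $T$. The key point is that the normalization constant of $\pi(\vec{x}\mid T)$ must be computed explicitly and shown not to depend on $T$. If it did depend on $T$, the posterior over trees would pick up an extra factor that is not of product form. So the first step is to establish, for an arbitrary spanning tree $\tilde T$ and general $\phi$ (not only the three special cases), the factorization
\begin{equation*}
\int_{\mathbb{R}^{|V|}} \phi(\lambda x_r)\prod_{\{v_1,v_2\}\in \tilde T}\phi\bigl(\lambda(x_{v_1}-x_{v_2})\bigr)\,d\vec{x} \;=\; \lambda^{-|V|},
\end{equation*}
which generalizes \eqref{eq:GMRF_proportionality_tree}, \eqref{eq:LMRF_proportionality_tree} and \eqref{eq:CMRF_proportionality_tree}.

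To prove this identity, orient $\tilde T$ away from the root $r$, so that every vertex $v\neq r$ has a unique parent $p(v)$ and $\tilde T$ has exactly $|V|-1$ edges $\{v,p(v)\}$. Change variables from $\vec{x}$ to $z_r=x_r$ and $z_v=x_v-x_{p(v)}$ for $v\neq r$. This map is linear and, in a topological ordering of the vertices, unit lower-triangular, so its Jacobian is $1$. The integrand then becomes $\phi(\lambda z_r)\prod_{v\neq r}\phi(\lambda z_v)$, a product of independent one-dimensional integrals. Since $\phi$ is a probability density, each integral equals $\lambda^{-1}$. Equivalently, one can integrate out leaves one at a time by induction on $|V|$. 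Either way, the normalizing constant is $\lambda^{-|V|}$ for every spanning tree, so
\begin{equation*}
\pi(\vec{x}\mid \tilde T)=\lambda^{|V|}\phi(\lambda x_r)\prod_{e\in\tilde T}\phi(\lambda\,\Delta_e\vec{x}).
\end{equation*}
Here $\Delta_e\vec{x}$ denotes the difference across edge $e$. It is well defined without choosing an orientation because $\phi$ is symmetric.

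Next, by Bayes' rule,
\begin{equation*}
\mathbb{P}(T=\tilde T\mid \vec{x})\;\propto\; \pi(\vec{x}\mid\tilde T)\,\mathbb{P}(T=\tilde T)\;\propto\; \lambda^{|V|}\phi(\lambda x_r)\prod_{e\in\tilde T} w(e)\,\phi(\lambda\,\Delta_e\vec{x}).
\end{equation*}
The factor $\lambda^{|V|}\phi(\lambda x_r)$ does not depend on $\tilde T$ and can be absorbed into the proportionality constant. Comparing with Definition \ref{def:RST} then identifies $T\mid\vec{x}$ as a weighted random spanning tree with weights $\widetilde w(e)=w(e)\phi(\lambda\,\Delta_e\vec{x})$. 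The normalizing constant is $\sum_{\tilde T}\prod_{e\in\tilde T}\widetilde w(e)$, which by Theorem \ref{thm:matrix_tree_theorem} equals a reduced Laplacian determinant built from the new weights.

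One caveat concerns the requirement of positive weights. If $\phi(\lambda\,\Delta_e\vec{x})=0$ for some edge, which cannot happen for the Gaussian, Laplace or Cauchy densities, then that edge has zero weight and is removed as per the graph conventions. Connectivity of the resulting graph is then automatic for any $\vec{x}$ with positive density, since such an $\vec{x}$ has positive weight on some spanning tree. The main obstacle is the $T$-independence of the normalizing constant. It relies on $\tilde T$ being a spanning tree with the correct number of edges and on the triangular change of variables. The rest is bookkeeping.
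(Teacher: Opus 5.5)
Your proof is correct and follows the same Bayes'-rule argument as the paper, with the root factor $\phi(\lambda x_r)$ cancelling in the proportionality over trees. The one addition is that you explicitly show, via the unit-triangular change of variables, that the normalizing constant of $\pi(\vec{x}\mid\tilde T)$ equals $\lambda^{-|V|}$ for every spanning tree. The paper's second proportionality uses this tacitly (it is only visible for the Gaussian, Laplace and Cauchy cases in \eqref{eq:GMRF_proportionality_tree}--\eqref{eq:CMRF_proportionality_tree}), so your step is exactly what makes that line valid for a general density $\phi$.
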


\begin{proof}
Combining \eqref{eq:thm_1} and \eqref{eq:thm_2} using Bayes' rule gives, for any $\tilde{T}\in\mathcal{T}(G)$,
\begin{align*}
\mathbb{P}(T=\tilde{T}\mid \vec{x})
&\propto \mathbb{P}(T=\tilde{T})\,\pi(\vec{x}\mid \tilde{T})\\
&\propto \left(\prod_{e\in\tilde{T}} w(e)\right)\,
          \phi(\lambda x_r)\prod_{\{v_1,v_2\}\in\tilde{T}}
          \phi\!\left(\lambda(x_{v_1}-x_{v_2})\right).
\end{align*}
The root term $\phi(\lambda x_r)$ does not depend on $\tilde{T}$ and cancels in the
proportionality. Therefore,
\begin{equation*}
\mathbb{P}(T=\tilde{T}\mid \vec{x})
\ \propto\ \prod_{\{v_1,v_2\}\in\tilde{T}}
\underbrace{\Bigl[w(\{v_1,v_2\})\,\phi\!\left(\lambda(x_{v_1}-x_{v_2})\right)\Bigr]}_{\widetilde{w}(\{v_1,v_2\})}.
\end{equation*}
This is precisely the law of a weighted random spanning tree with updated edge weights
$\widetilde{w}(\{v_1,v_2\})=w(\{v_1,v_2\})\,\phi(\lambda(x_{v_1}-x_{v_2}))$.
\end{proof}

In Theorem \ref{thm:RST-conjugacy}, the assumptions that $\phi$ is a zero-location, unimodal, unit-scale probability density are not used, but they have important consequences. Unit-scale highlights that we can tune the global scale parameter $\lambda$ ourselves. The assumptions that $\phi$ is symmetric and unimodal around $0$ imply that a larger magnitude jump $|x_{v_1}-x_{v_2}|$ will decrease the conjugate weights $w(\{v_1,v_2\})\,\phi(\lambda(x_{v_1}-x_{v_2}))$. Therefore, trees $T$ containing larger magnitude jumps have lower conditional probability $\mathbb{P}(T=\tilde{T}\mid \vec{x})$ than trees that contain only small magnitude jumps.

\subsection{Fractal-like interfaces}
Figure \ref{fig:RST-GMRF_samples} shows a few samples of random spanning trees on $1024$-by-$1024$ images and associated GMRF samples. These samples show the rich fractal-like structure of random spanning trees and how they transfer into RST-GMRF samples. We can show that when, the image size becomes larger, the fractal-like interfaces do indeed become fractals. In particular, look at Figure \ref{fig:RST-GMRF_sample_zoom_levels}, which contains a large dimensional sample at various zoom levels. At the closest zoom level, the image is split into two colors. On that scale, the graph distance between any pixels of the same color is very small compared to the graph distance between the two colors. By taking this to the extreme using a simplified model illustrated in Figure \ref{fig:RST-proof}, we can show that in the limit these interfaces converge to fractals.

\begin{proposition}\label{thm:fractal_interface}
    Consider a square domain $[0,1]^2$ with square-lattice discretization $\delta \mathbb{Z} \cap [0,1]^2$ and color the boundary of the left half blue and the right half red. Just like Wilson's algorithm, perform loop-erased random walks until the boundary is hit and color the trajectory based on which boundary it connects to. This is equivalent to sampling from a random spanning tree on the graph with the boundaries connected to some common terminal node as visualized in Figure \ref{fig:RST-proof}. Consider the random curve describing the interface between the red and blue from the left to right side of the square. As the discretization becomes finer, i.e., $\delta$ goes to zero, this random curve converges to a chordal Schramm-Loewner evolution (SLE\textsubscript{$\kappa$}) with $\kappa = 2$, which almost surely has fractal dimension $1.25$.
\end{proposition}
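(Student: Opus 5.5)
The plan is to reduce the statement to the known scaling limit of the uniform spanning tree Peano curve, proved by Lawler, Schramm and Werner. The first step is to show that the coloring procedure samples a uniform spanning tree on a modified graph. Start from the grid graph $G_\delta$ on $\delta\mathbb{Z}^2\cap[0,1]^2$. Identify all blue boundary vertices into a single terminal $b$ and all red boundary vertices into a single terminal $\rho$, and connect $b$ and $\rho$ to a common terminal node as in Figure \ref{fig:RST-proof}.

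Running Wilson's algorithm rooted at that terminal gives, by the correctness of Wilson's algorithm recalled in Subsection \ref{subsec:RST}, a uniform spanning tree of this graph. Since each walk stops at its first hit of the boundary, each branch attaches to exactly one of $b$ or $\rho$. Deleting the terminal therefore leaves a uniform two-component spanning forest, wired separately on the blue arc and on the red arc. Equivalently, it is a uniform spanning tree on the graph where the whole boundary is wired and then split into two wired arcs. The coloring of interior vertices is exactly the component label, so the red--blue interface is the dual path separating the two components. Its endpoints are the two marked boundary points $a$ (bottom midpoint) and $c$ (top midpoint) where the colors change. (The statement says ``left to right''; I would fix the geometry so that the interface runs between the two color-change points.)

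The second step is to identify this interface with a known discrete object. In the planar dual, the separating path between a wired-arc component and the complementary wired-arc component is a path in the dual tree from $a^*$ to $c^*$. Dually, a uniform spanning tree with free/wired arcs has dual tree with the complementary boundary conditions, and the path between the two marked dual points is distributed as a loop-erased random walk. Concretely, I would argue via Wilson's algorithm on the dual graph, rooted at the dual boundary arc. Started at the dual point near $a$ and stopped at the arc near $c$, this yields a loop-erased random walk in the domain from one boundary point to another, possibly conditioned or reflected appropriately. I would make this precise with Euler's formula and planar duality: there is a bijection between spanning forests with two wired arcs and dual spanning trees, which preserves uniformity.

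The final step is to invoke the Lawler--Schramm--Werner theorem. The chordal loop-erased random walk path in a smooth, or at least Jordan, domain between two boundary points converges in the Hausdorff/curve metric, modulo reparametrization, to chordal SLE$_2$ as $\delta\to0$. The fractal dimension then follows from Beffara's theorem, $\dim \mathrm{SLE}_\kappa = 1+\kappa/8 = 1.25$ almost surely for $\kappa=2$.

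I expect the second step to be the main obstacle. One must check carefully that the interface is the chordal LERW in the sense of Lawler--Schramm--Werner, that is, excursion/reflected boundary conditions on one arc, and not a radial or differently conditioned version. One must also check that the square's corners and the boundary discretization meet the convergence hypotheses. I would first try to match the setup exactly to the ``LERW with reflecting boundary on one arc'' formulation used by Lawler--Schramm--Werner for chordal SLE$_2$. Failing that, I would cite Zhan's or Yadin--Yehudayoff's convergence results for chordal LERW in general domains.
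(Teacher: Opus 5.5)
Your outline is the paper's own sketch: take the dual tree, read the interface as the dual branch between the colour-change points, use Wilson's algorithm to see a loop-erased walk, then apply Lawler--Schramm--Werner and Beffara. The step you single out as the main obstacle is exactly where it breaks, and the paper's sketch passes over it too.

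\textbf{What the dual object actually is.} The whole boundary is wired in the primal; splitting it into two arcs only records the colours. So the dual tree is the uniform spanning tree of the full grid of $\delta$-squares with \emph{free} boundary. There is no wired dual arc to root at or to stop on. The interface is the branch of this free tree between the two boundary dual vertices $a^*,c^*$ at the colour changes. By Wilson's algorithm rooted at the single vertex $c^*$, it is the loop erasure of simple random walk on the dual grid started at $a^*$, reflected on all of $\partial[0,1]^2$, and stopped at $c^*$. This is not the chordal (excursion, Dirichlet-boundary) LERW whose limit is chordal SLE$_2$. No choice of citation fixes this, because the limit is genuinely different.

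\textbf{Why the limit is not chordal SLE$_2$.} Grow the branch from $a^*$, rooting Wilson's algorithm at $a^*$ instead. Let $H_t(z)$ be the probability that the reflected walk from $z$ first hits the grown path at its tip. The domain Markov property makes $H_t(z)/H_t(c^*)$ a martingale. Its continuum limit is the Poisson kernel at the tip with Dirichlet data on the curve and Neumann data on $\partial D$. In $\mathbb{H}$, take target $\infty$, tip image $W_t$, and images $O^L_t<W_t<O^R_t$ of the two sides of the base. The observable is then
\[
\mathrm{Re}\,\frac{\sqrt{(g_t(z)-O^L_t)(g_t(z)-O^R_t)}}{g_t(z)-W_t}.
\]
It\^o's formula shows this is a local martingale for every $z$ if and only if $\kappa=2$ and $dW_t=\sqrt{2}\,dB_t-\frac{dt}{W_t-O^L_t}-\frac{dt}{W_t-O^R_t}$. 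That is $\mathrm{SLE}_2(-1,-1)$ with force points on both sides of the start. For comparison, with Dirichlet boundary the observable is $\mathrm{Im}(-1/(g_t(z)-W_t))$, the drift vanishes, and one gets chordal SLE$_2$.

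\textbf{What can be rescued.} The dimension survives. After any positive time the force points stay strictly away from $W_t$. By Girsanov, the curve is then locally absolutely continuous with respect to SLE$_2$ away from its endpoints. Beffara's value $1+\kappa/8=1.25$ then follows by countable stability. A correct proof therefore needs two things: a convergence theorem for LERW with reflecting boundary to $\mathrm{SLE}_2(-1,-1)$, and this absolute-continuity step. The ``chordal SLE$_2$'' in the statement is not what this argument delivers.

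\textbf{A smaller slip in your first step.} Wilson's algorithm must be rooted at the whole boundary. If the two arc vertices are joined to a terminal by ordinary edges, the walks do not stop at the boundary, and the tree need not split into two components.
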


The proof of this proposition relies on framing the random interface as a chordal Schramm-Loewner evolution, an important tool in studying limits of lattice models in statistical mechanics, and rooted in stochastic and complex analysis. The main conclusion follows directly from standard properties of the Schramm-Loewner evolution. Providing a proper definition and discussion of the SLE\textsubscript{$\kappa$} is beyond the scope of this article, hence we quickly sketch the relation between the random spanning tree and the random interface.

\begin{proof}[Sketch of proof]
    Observe that the graph is planar, hence we can consider a dual graph. Dual to the sampled uniform random spanning tree on the graph, there exists a dual uniform random spanning tree on the dual graph of the square-lattice discretization. In this dual spanning tree, there exists a unique path from the top of the square to the bottom of the square; this path represents the interface between the red and blue regions. This unique path on the dual random spanning tree is distributed according to a loop-erased random walk between the start and end point. The path of a loop-erased random walk on our discretization converges to SLE\textsubscript{$\kappa$} with $\kappa = 2$ with associated fractal dimension is $1 + \frac{\kappa}{8} = 1.25$, see \cite{beffara2008dimension, lawler2011conformal}. 
\end{proof}

\begin{figure}
    \centering
    \includegraphics[width=1.0\linewidth]{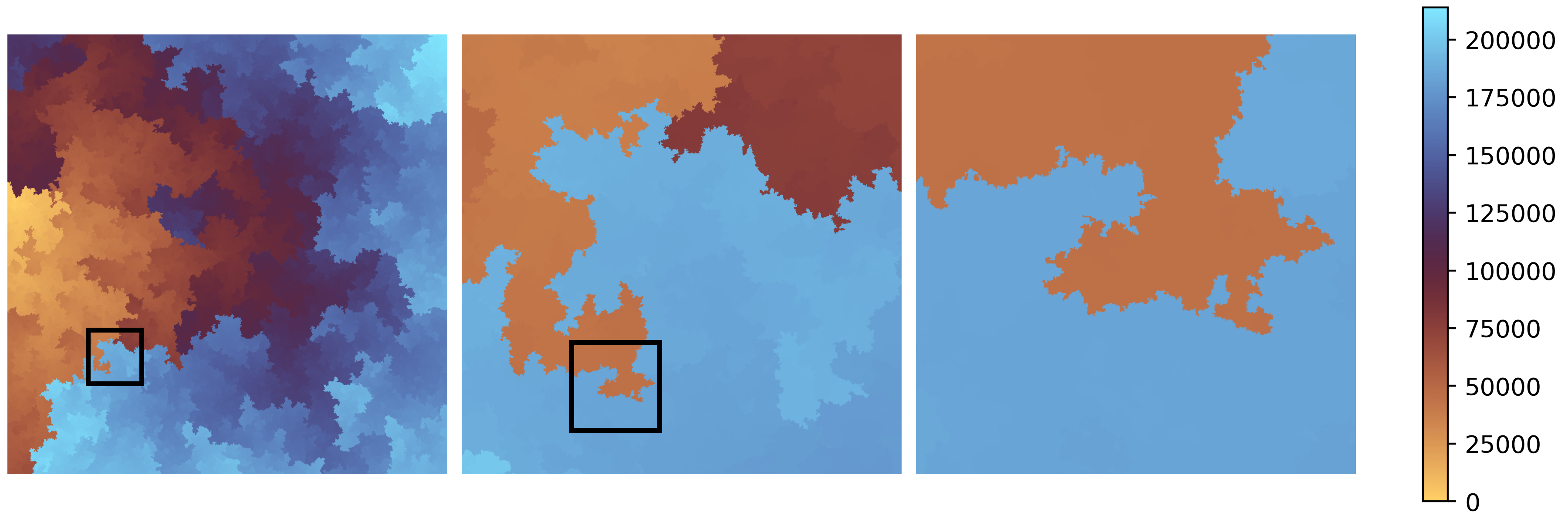}\hfill
    \includegraphics[width=1.0\linewidth]{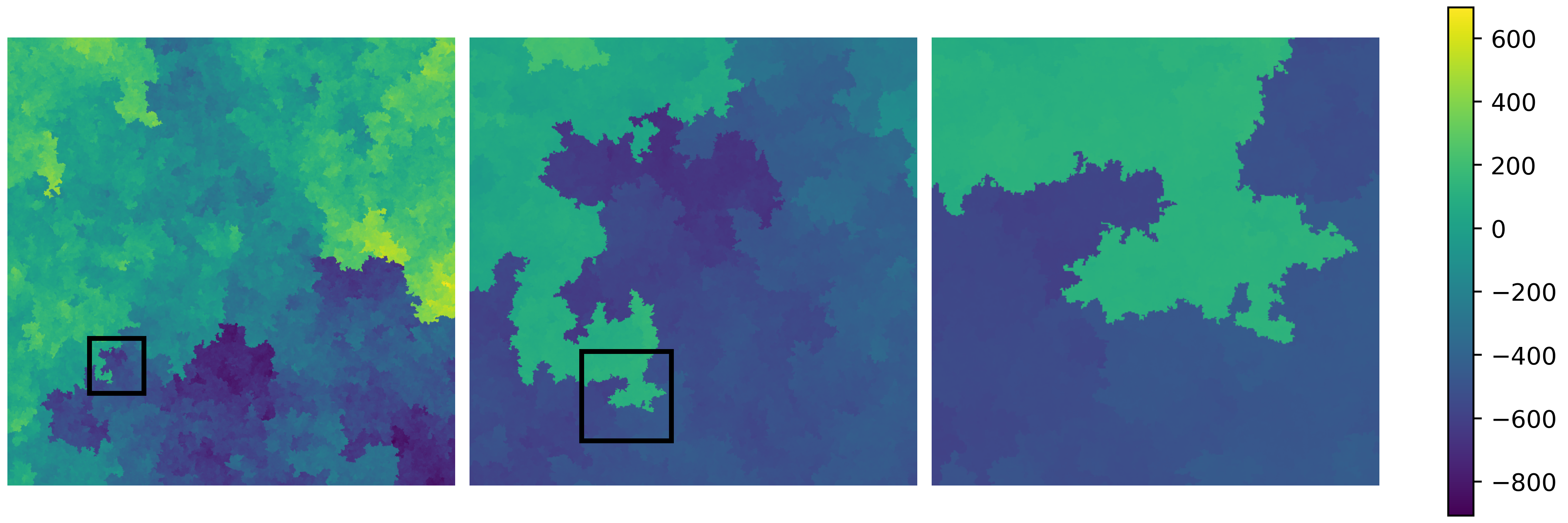}\hfill
    \caption{Example of an RST-GMRF sample at various zoom levels. The left most image is size $16384^2$, with the right-most zoom being $400^2$.}
    \label{fig:RST-GMRF_sample_zoom_levels}
\end{figure}

\begin{figure}
    \centering
    \input{tikz_illustrations/RST_fractal}
    \caption{Example of an interface separating two components of a spanning tree.}
    \label{fig:RST-proof}
\end{figure}

\subsection{Algorithms}
Sampling directly from the posterior distribution on $(\vec{x}, T)$ under the RST-MRF prior is expected to be rather difficult due to the coupling of the continuous image variable $\vec{x}$ and the discrete spanning tree $T$. Hence, we consider a Gibbs sampler that alternates between $\vec{x}\,|\,T, \vec{y}$ and $T\,|\,\vec{x}$, equivalently $T\,|\,\vec{x}, \vec{y}$. When conditioning on the spanning tree $T$, the resulting distribution is equivalent to a posterior distribution with a tree-structured MRF prior, from which we can sample efficiently using a linear randomize-then-optimize approach combined with scale-mixture representations for LMRF \eqref{eq:scale-mixture_laplace} and CMRF \eqref{eq:scale-mixture_cauchy}. When conditioning on the image $\vec{x}$, we use Theorem \ref{thm:RST-conjugacy} to conclude that this distribution is a reweighted random spanning tree, from which we can sample using Wilson's algorithm. This sampling scheme is summarized in Algorithm \ref{alg:gibbs_rst-mrf}.

\begin{algorithm}
\caption{Gibbs sampler for linear inverse problems with RST-MRF priors}
\label{alg:gibbs_rst-mrf}
\begin{algorithmic}[1]

\STATE \textbf{Input:} data $y$, forward operators $A$, noise variance $\sigma^2$, iterations $N$
\STATE \textbf{Initialize:} $x^{(0)}$

\FOR{$t = 1$ to $N$}
    \STATE Compute conjugated graph-weights $w^{(t)}$ from $x^{(t-1)}$ using Theorem \ref{thm:RST-conjugacy}
    \STATE Sample random spanning tree $T^{(t)}$ with weights  $w^{(t)}$, e.g., using Wilson's algorithm
    \STATE Compute a finite difference matrix $D$ corresponding to the spanning tree $T^{(t)}$
    \STATE Sample $x^{(t)}$ from $\pi(\vec{x}\,|\, \vec{y}, D)$, e.g., using a scale-mixture representation.
\ENDFOR

\STATE \textbf{Output:} $\{(x^{(t)}, T^{(t)})\}_{t=1}^N$

\end{algorithmic}
\end{algorithm}

Due to the conjugacy in Theorem \ref{thm:RST-conjugacy}, large magnitude jumps $|x_{v_1}-x_{v_2}|$ in the image or a large prior strength $\lambda$ can cause large weight ratios in $T\,|\,\vec{x}$, which can greatly increase the runtime of sampling algorithms as discussed at the end of Subsection \ref{subsec:RST}. These large weight ratios can cause bottlenecks for random-walk-based spanning tree sampling algorithms, which can get stuck in regions separated by low-weight edges, e.g., as illustrated in Figure \ref{fig:RST_bottleneck_example}. One approach to circumvent this issue is to enforce a lower bound on the conjugate edge weights $\widetilde{w}(e)$. This makes the crossing of bottlenecks by the random walk more likely, but thereby increasing the chance of selecting edges with large jumps, which counteracts the behaviour of the hyperprior.

Instead, to prevent this computational bottleneck at the cost of a slightly different tree distribution, we can change the model by extending the graph with a single vertex. For any graph $G = (V, E)$, add a new vertex $t$ that is connected to every other vertex, i.e., we consider the new graph $G_t = (V \cup \{t\}, E \cup \{\{t, v\}\}_{v \in V})$. Keeping the existing edge weights unchanged, we let $w(\{v, t\}) = \rho$ for some termination weight $\rho > 0$. As Wilson's method can be rooted at any vertex, we will always use the new terminal vertex $t$ as root. Wilson's algorithm can now step to the new terminal vertex $t$ at any vertex with a probability depending on $\rho$. Hence, a random walk stuck in a piecewise constant region will at some point escape to the terminal vertex and bypass the low-weight bottleneck altogether. This model is illustrated in 1D in Figure \ref{fig:RST_bottleneck_example_terminal}. To be precise, the expected runtime can be bounded in terms of the number of vertices $|V|$ and the termination weight $\rho$.

\begin{figure}
    \centering
    \hspace{5em}
    \begin{tikzpicture}[scale=1.2,
    vertex/.style={circle, draw, fill=white, inner sep=2pt},
    gridedge/.style={gray!40, line width=0.6pt},
    treeedge/.style={black, line width=1.0pt}
]

% Number of points
\def\N{6}

% Vertices
\foreach \x in {0,...,\numexpr\N-1\relax} {
  \node[vertex] (\x) at (\x, 0) {};
}

% Terminal node
\node[vertex] (t) at (2.5, -1) {};
\node[] at (2.5, -1.25) {t};
  
% Edges
\foreach \x in {0,...,\numexpr\N-2\relax} {
  \draw[gridedge] (\x) -- (\the\numexpr\x+1\relax);
}
\foreach \x in {0,...,\numexpr\N-1\relax} {
  \draw[gridedge] (\x) -- (t);
}

\node[] at (0.5, 0.3) {1};
\node[] at (1.5, 0.3) {1};
\node[] at (2.5, 0.3) {$10^{-5}$};
\node[] at (3.5, 0.3) {1};
\node[] at (4.5, 0.3) {1};

\node[] at (1.0, -0.6125) {$\rho$};

\draw[treeedge] (0) -- (1) -- (2);
\draw[treeedge] (3) -- (4) -- (5);

\draw[treeedge] (1) -- (t);
\draw[treeedge] (5) -- (t);

\end{tikzpicture}
    \hspace{5em}
    \caption{Example of a random spanning tree on a weighted graph with a terminal vertex.}
    \label{fig:RST_bottleneck_example_terminal}
\end{figure}

\begin{theorem}\label{thm:runtime_termination}
    Consider the extended graph as described above with termination weight $\rho > 0$ and let $d$ be the maximal degree of the original graph. The expected runtime of Wilson's algorithm with the termination vertex as root, i.e., the expected number of steps of all random walks combined, is $\mathcal{O}\left(\frac{w_{\max}d}{\rho}|V|\right)$, where $w_{\max} := \max_{e\in E}w(e)$. Furthermore, assuming a uniform random spanning tree prior, i.e., $w(e) = 1$ for all $e \in E$, and uniform prior strength, i.e., $\lambda(e) = \lambda$ for all $e \in E$, then the runtime is $\mathcal{O}\left(\frac{\phi(0)d}{\rho}|V|\right)$ when the weights are conjugated using any vector.
\end{theorem}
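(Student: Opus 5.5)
We bound the expected number of random-walk steps in Wilson's algorithm through the hitting time of the terminal vertex $t$. The main tool is the standard fact that, with $t$ as root, the expected total running time of Wilson's algorithm is independent of the order in which starting vertices are chosen. It equals $\sum_{v\in V} \mathbb{E}_v[\text{time spent at } v \text{ before hitting } t]$ times the mean step count; see \cite{wilson1996generating, lyons2017probability}. A cruder bound also works: each loop-erased walk started at $v$ costs at most the number of steps of a plain random walk from $v$ until it hits the current tree, which always contains $t$. Hence the total expected runtime is at most $\sum_{v\in V}\mathbb{E}_v[\tau_t]$, where $\tau_t$ is the hitting time of $t$ for the weighted random walk on $G_t$. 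This crude version yields only $\mathcal{O}(|V|\cdot\text{per-vertex bound})$, which is exactly the form claimed, so we use it.

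The key step is a uniform lower bound on the probability of jumping to $t$. At any vertex $v\in V$, the walk moves to $t$ with probability
\begin{equation*}
p_v = \frac{\rho}{\rho + \sum_{u\sim v} w(\{u,v\})} \;\ge\; \frac{\rho}{\rho + d\,w_{\max}}.
\end{equation*}
Thus every step from a vertex of $V$ is an independent trial with success probability at least $p:=\rho/(\rho+dw_{\max})$. So $\tau_t$ is stochastically dominated by a geometric random variable, and $\mathbb{E}_v[\tau_t]\le 1/p = 1 + dw_{\max}/\rho$. Summing over $v\in V$ gives an expected runtime of at most $|V|(1+dw_{\max}/\rho)$. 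This is $\mathcal{O}\bigl(\tfrac{w_{\max}d}{\rho}|V|\bigr)$ in the relevant regime $\rho\lesssim dw_{\max}$; otherwise it is $\mathcal{O}(|V|)$, which is trivially covered up to constants. The domination argument needs some care, since the walk's steps are not independent. We handle it with the strong Markov property: $\mathbb{P}_v(\tau_t>k)\le(1-p)^k$ follows by induction on $k$, because conditional on the past, the next step hits $t$ with probability at least $p$.

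For the second statement, we apply Theorem \ref{thm:RST-conjugacy} with $w(e)=1$. The conjugated weights are $\widetilde w(\{v_1,v_2\})=\phi(\lambda(x_{v_1}-x_{v_2}))$ on original edges, and the terminal edges keep weight $\rho$. We assume the conjugation is applied only to edges of $E$, as the construction of $G_t$ specifies. Since $\phi$ is unimodal around $0$, we have $\widetilde w(e)\le\phi(0)$ for every $\vec{x}$. Substituting $w_{\max}\le\phi(0)$ into the first bound gives $\mathcal{O}\bigl(\tfrac{\phi(0)d}{\rho}|V|\bigr)$ uniformly in $\vec{x}$.

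The main obstacle is the justification that the cost of loop-erasure in Wilson's algorithm is bounded by the hitting time of the current tree. This holds because the walk's steps are counted before erasure and the tree only grows, so hitting $t$ implies hitting the tree. Once this is accepted, the rest is the elementary geometric bound above.
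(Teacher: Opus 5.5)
Your proposal is correct and follows the paper's proof essentially step for step. The ingredients are the same: the uniform lower bound $\rho/(\rho+dw_{\max})$ on the probability of jumping to $t$, geometric domination of each walk's length (which you justify more carefully than the paper, via the Markov property), at most $|V|$ walks because each one adds its starting vertex to the tree, and $\widetilde w(e)\le\phi(0)$ by unimodality for the second claim.

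One small caveat, which the paper's own proof shares: $1+dw_{\max}/\rho=\mathcal{O}(dw_{\max}/\rho)$ holds only when $\rho=\mathcal{O}(dw_{\max})$. For $\rho\gg dw_{\max}$ the runtime is always at least $|V|$, since each of the $|V|$ tree edges costs a step, so that regime is not ``trivially covered'' by the stated bound. It should instead be excluded, or the bound read as $\mathcal{O}\bigl((1+\tfrac{w_{\max}d}{\rho})|V|\bigr)$.
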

\begin{proof}
    At every vertex, the probability of stepping to the terminal vertex is at least $\frac{\rho}{w_{\max}d+\rho}$ due to each vertex having at most $d$ neighbors excluding the terminal vertex. Each random walk in Wilson's algorithm stops if either it hits vertices already included in the tree or the random walk jumps to the termination vertex. The number of steps until the latter event is stochastically dominated by a geometric distribution with success probability $\frac{\rho}{w_{\max}d+\rho}$, hence with expectation $\frac{w_{\max}d+\rho}{\rho} = \mathcal{O}\left(\frac{w_{\max}d}{\rho}\right)$. Given that every loop-erased random walk will always result in including the starting vertex into the tree, the total expected runtime is bounded by $\mathcal{O}\left(\frac{w_{\max}d}{\rho}|V|\right)$.

    Under the assumption of uniform weights and prior strength, the maximum weight $w_{\max}$ is achieved when difference between two neighbouring vertices is zero, in which case $w_{\max} = \phi(0)$.
\end{proof}

Compared to the runtime of Wilson's algorithm for uniform spanning trees on a 2D grid $\mathcal{O}\left(|V|\log|V|\right)$, the runtime bound in Theorem \ref{thm:runtime_termination} has no logarithmic factor. Furthermore, the possibility of very low $w_{\min}$ caused by large magnitude jumps and the rapid decay of the updated conjugate weights $\widetilde w(e)$ has been replaced by the controllable termination weight $\rho$. Whilst this extended graph model is still exact for the modified hyperprior, the newly introduced vertex and edges are not part of the unknown parameters encoded in $\vec{x} \in \mathbb{R}^{|V|}$. Unless there is only a single edge between the original graph and the terminal vertex $t$, the graph used for constructing the finite difference operator of the MRF is no longer a spanning tree, but a spanning forest. The number of connected components in this spanning forest equals the degree of the terminal vertex and increases with $\rho$. Figure \ref{fig:RST_runtime} shows how this construction can greatly improve the runtime of Wilson's algorithm, but can cause a very large number of components if not properly tuned. This approach thus requires proper tuning of $\rho$ to balance runtime and accuracy.

\begin{figure}
    \centering
    \includegraphics[width=1.0\linewidth]{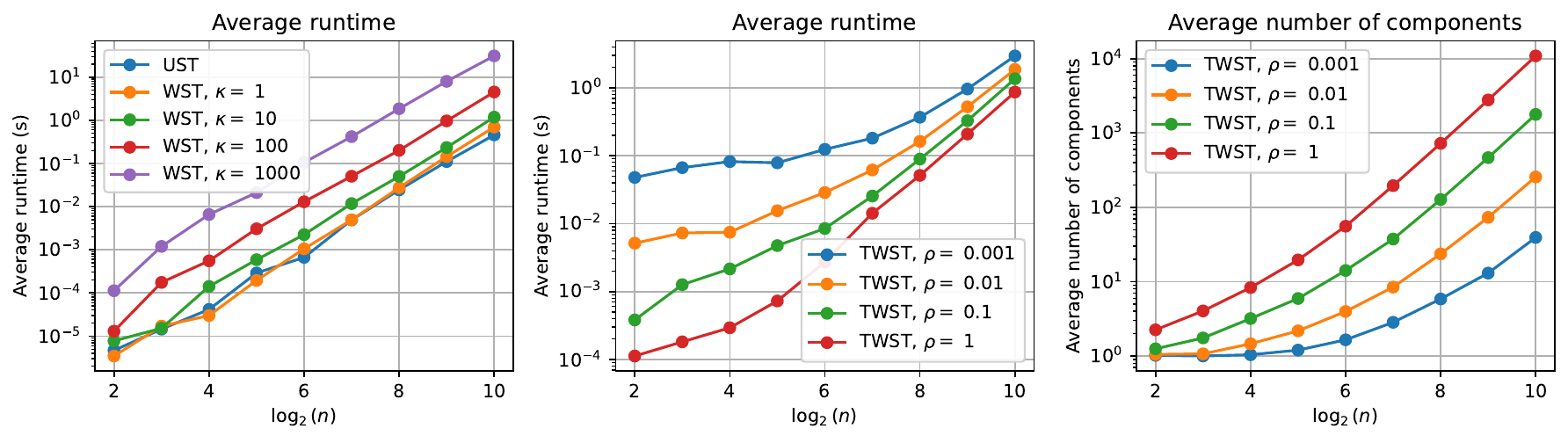}
    \caption{Average runtime over $100$ samples of uniform (UST) and weighted random spanning trees (WST) at grids of size $(n, n)$. The value $\kappa$ is the weight on the horizontal edges, when the vertical edges are fixed to one, hence creating a bottleneck for the loop-erased random walk to move over the vertical edges for large $\kappa$.}
    \label{fig:RST_runtime}
\end{figure}

\section{Numerical Experiments}\label{sec:experiment}

In this section, we study the RST-MRF priors from both a qualitative and quantitative perspective, visually and in terms of runtime. We consider three different standard inverse imaging problems of increasing ill-posedness, all on images of size 128-by-128. Specifically, we consider a denoising problem with a relatively large amount of noise, see Figure \ref{fig:Denoising-data}, two deblurring problems, see Figure \ref{fig:deblurring-data}, and an inpainting problem, see Figure \ref{fig:inpainting-data}.

\begin{figure}
    \centering
    \begin{subfigure}[b]{1.0\textwidth} 
        \centering
        \includegraphics[width=0.35\linewidth]{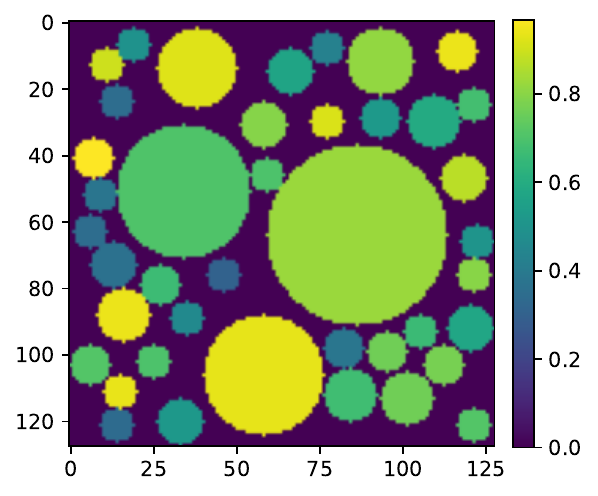}
        \includegraphics[width=0.35\linewidth]{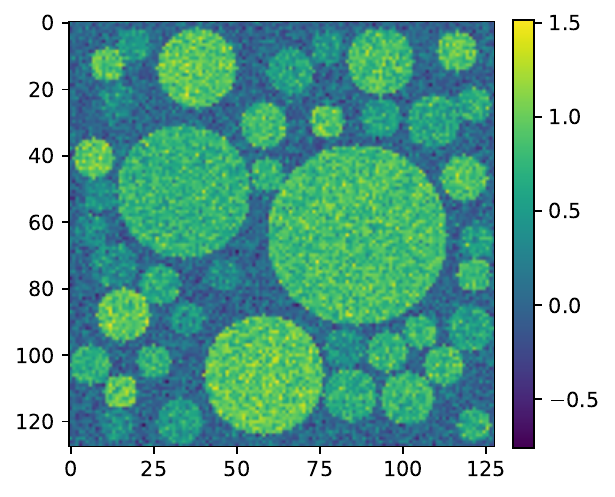}
        \caption{Ground truth and data of the denoising problem. Additive, independent Gaussian noise has been added with componentwise standard deviation $\sigma = 0.2$.}
        \label{fig:Denoising-data}
    \end{subfigure}
    \begin{subfigure}[b]{1.0\textwidth} 
        \centering
        \includegraphics[width=0.35\linewidth]{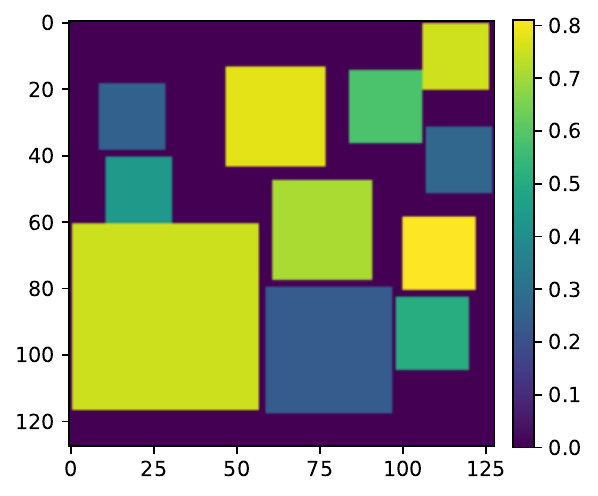}
        \includegraphics[width=0.35\linewidth]{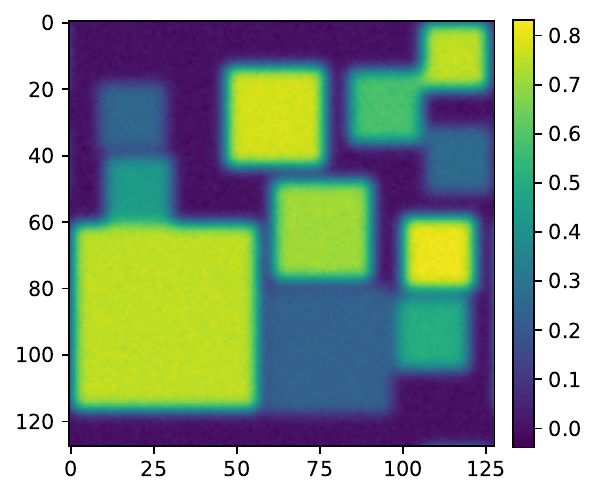}
        \includegraphics[width=0.35\linewidth]{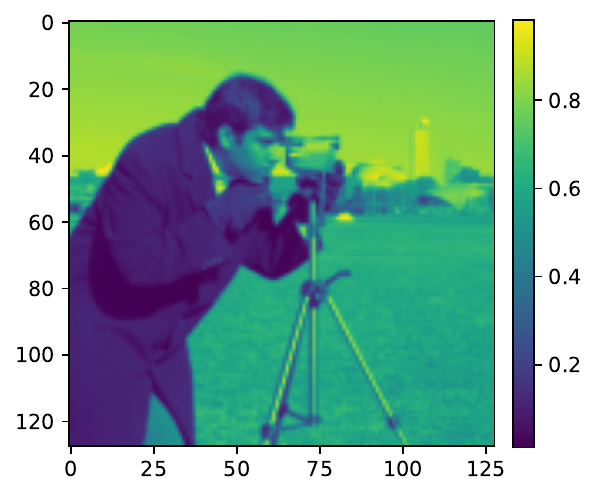}
        \includegraphics[width=0.35\linewidth]{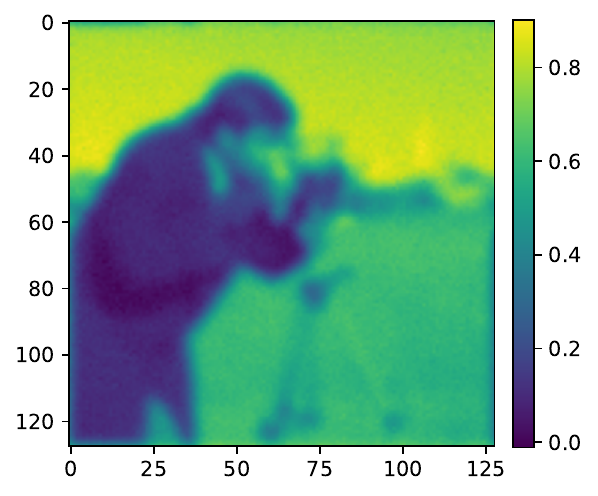}
        \caption{Ground truths and data of the deblurring problems. Additive, independent Gaussian noise has been added with componentwise standard deviation $\sigma = 10^{-2}$.}
        \label{fig:deblurring-data}
    \end{subfigure}
    \begin{subfigure}[b]{1.0\textwidth} 
        \centering
        \includegraphics[width=0.35\linewidth]{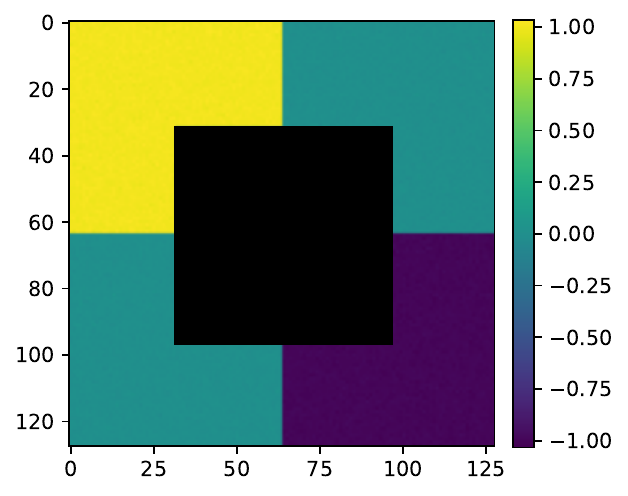}
        \caption{Data of the noisy inpainting problem. The whole image is 128-by-128, with the center 64-by-64 square removed. Additive, independent Gaussian noise has been added with componentwise standard deviation $\sigma = 10^{-2}$.}
        \label{fig:inpainting-data}
    \end{subfigure}
\end{figure}

\subsubsection*{Various considerations}
For both GMRF, LMRF and their RST variants, the prior strength $\lambda$ is used such that the density of the difference is proportional to $\phi\left(\lambda (x_{v_1} - x_{v_2})\right)$ such that larger $\lambda$ pushes the difference more towards zero. For the CMRF and its RST variants, the prior strength $\lambda$ is used such that the density of the difference is proportional to $\phi\left(\lambda^{-1} (x_{v_1} - x_{v_2})\right)$ such that larger $\lambda$ puts more weight on the heavy tail allowing for larger jumps. Thereby, for all priors, a larger $\lambda$ will have a stronger edge-preserving effect, as can be seen in the figures throughout this section.

To sample from the conditional posterior $\vec{x} \,|\, T, \vec{y}$, we use scale-mixture representations when applicable, and sample from the resulting Gaussian posterior by solving a randomized linear least-squares problem, see for example \cite{bardsley2012mcmc}. To solve this optimization problem, we use the Conjugate Gradient (CG) algorithm as implemented in \texttt{SciPy} \cite{2020SciPy-NMeth} with a relative tolerance of $10^{-6}$. The linear system can be severely ill-conditioned, especially for the RST-LMRF and RST-CMRF due to their heavier tails and/or sharp peaks around zero, hence some experiments make use of an approximate Jacobi preconditioner with the diagonal matrix estimated using a variant of Hutchinson's algorithm \cite{bekas2007estimator}. 

Any means or standard deviations for the non-log-concave priors CMRF, RST-GMRF, RST-LMRF and RST-CMRF are computed as an average over 100 independent runs. For convenience, we denote by $\rho$ the original termination weight divided by $\phi(0)$, such that the runtime of Wilson's algorithm is bounded by $\mathcal{O}\left(\rho^{-1}|V|\right)$. The full Python code of these experiments is available\footnote{The code can be found at: \url{https://github.com/jeverink/RST-MRF-paper}}.

\subsection{Denoising}

\begin{figure}[htbp]
\centering

% Row 1
\begin{subfigure}{0.48\textwidth}
    \centering
    \includegraphics[width=\linewidth]{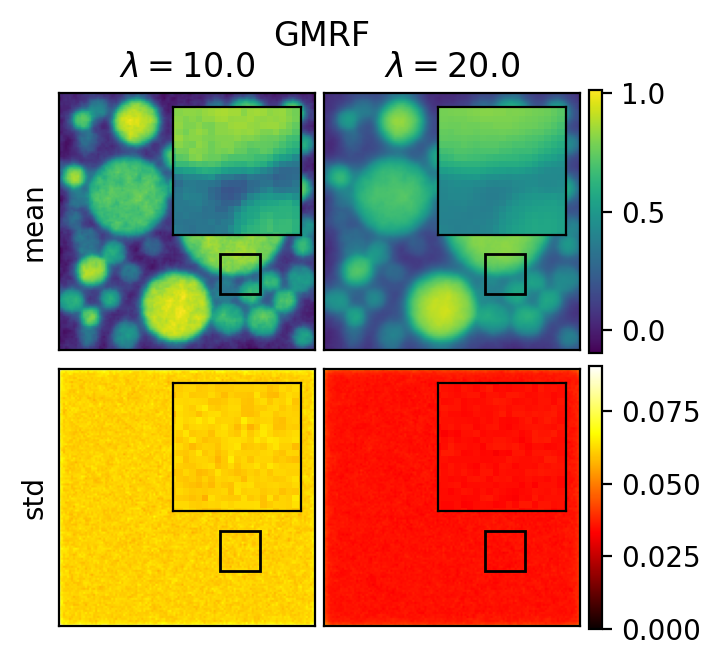}
\end{subfigure}
\hfill
\begin{subfigure}{0.48\textwidth}
    \centering
    \includegraphics[width=\linewidth]{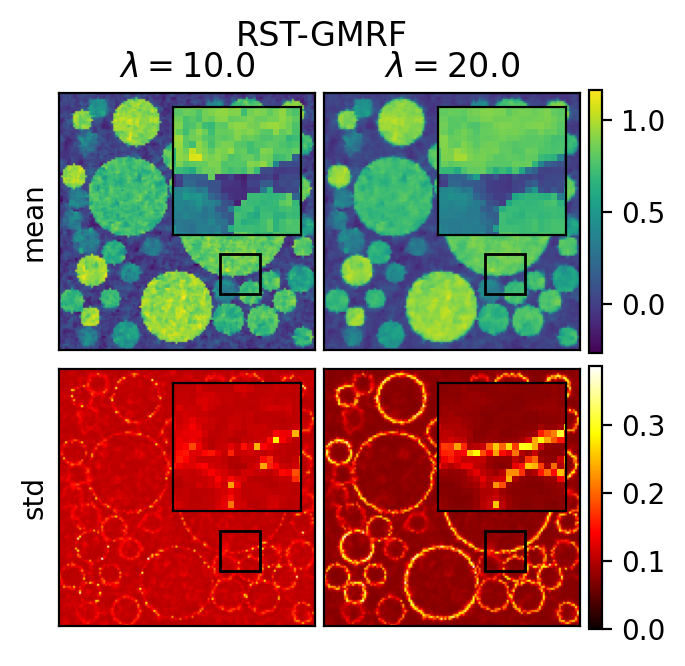}
\end{subfigure}

%\vspace{0.5em}

% Row 2
\begin{subfigure}{0.48\textwidth}
    \centering
    \includegraphics[width=\linewidth]{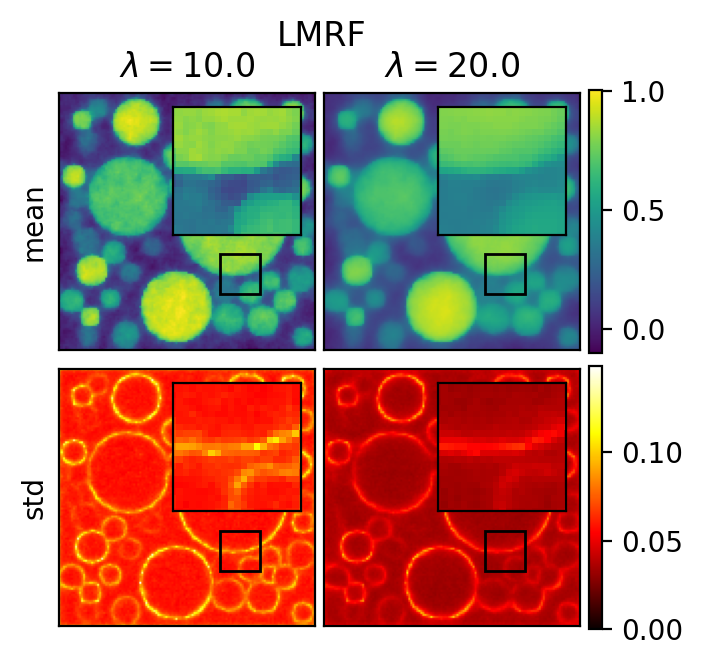}
\end{subfigure}
\hfill
\begin{subfigure}{0.48\textwidth}
    \centering
    \includegraphics[width=\linewidth]{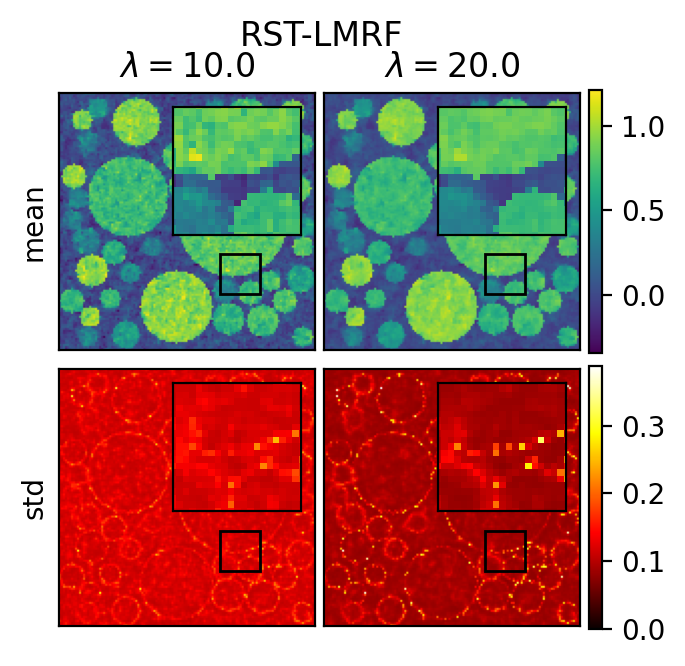}
\end{subfigure}

%\vspace{0.5em}

% Row 3
\begin{subfigure}{0.48\textwidth}
    \centering
    \includegraphics[width=\linewidth]{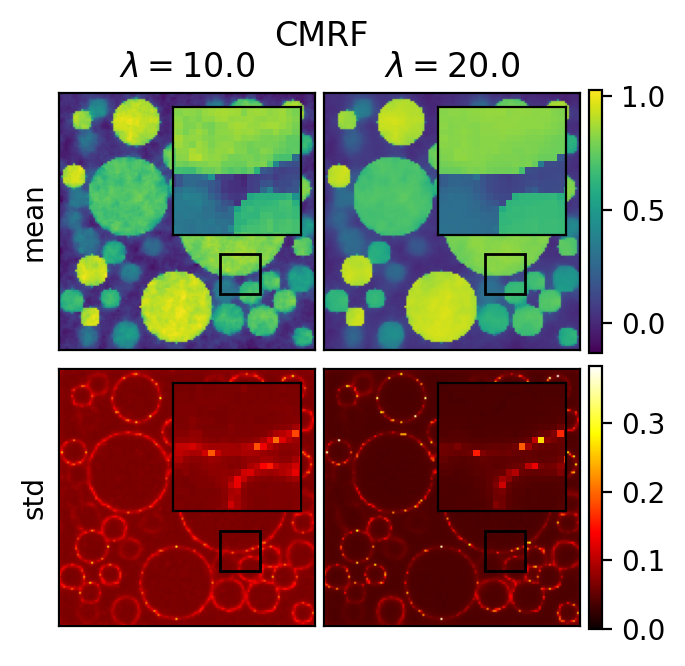}
\end{subfigure}
\hfill
\begin{subfigure}{0.48\textwidth}
    \centering
    \includegraphics[width=\linewidth]{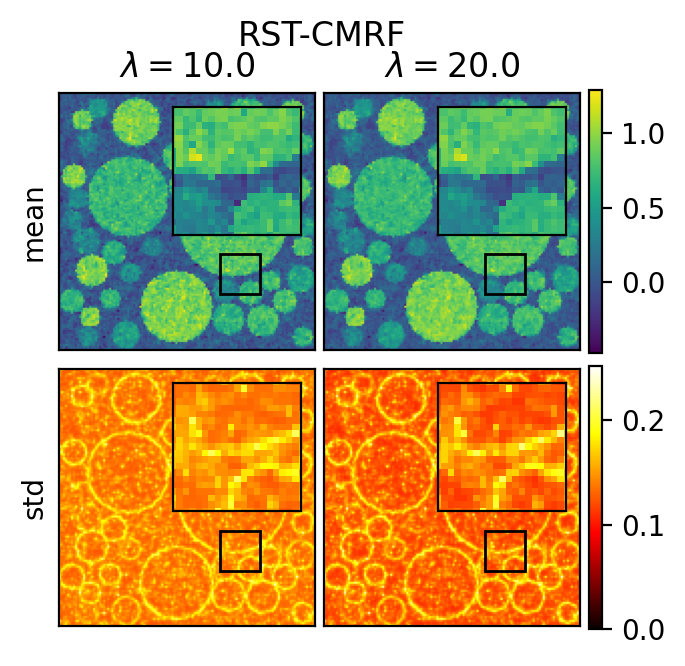}
\end{subfigure}

\caption{Componentwise means and standard deviations for the various MRFs and RST-MRFs.}
\label{fig:denoising_mean_std}
\end{figure}
Figure \ref{fig:denoising_mean_std} shows componentwise means and standard deviations of the denoising problem with GMRF, LMRF, CMRF, RST-GMRF, RST-LMRF and RST-CMRF priors for two prior strengths. All priors, except for the GMRF and LMRF which oversmooth, seem to preserve edges, signified by an increased standard deviation around the edges. Whilst the LMRF and CMRF preserve sufficiently large edges, it is noticeable that smaller edges get smoothed out similarly to the GMRF in the mean images. While all the RST hyperpriors preserve edges, it is clear that they do smooth the piecewise constant regions less, a result of the lack of connecting edges in these regions.

\begin{figure}
    \centering
    \includegraphics[width=1.0\linewidth]{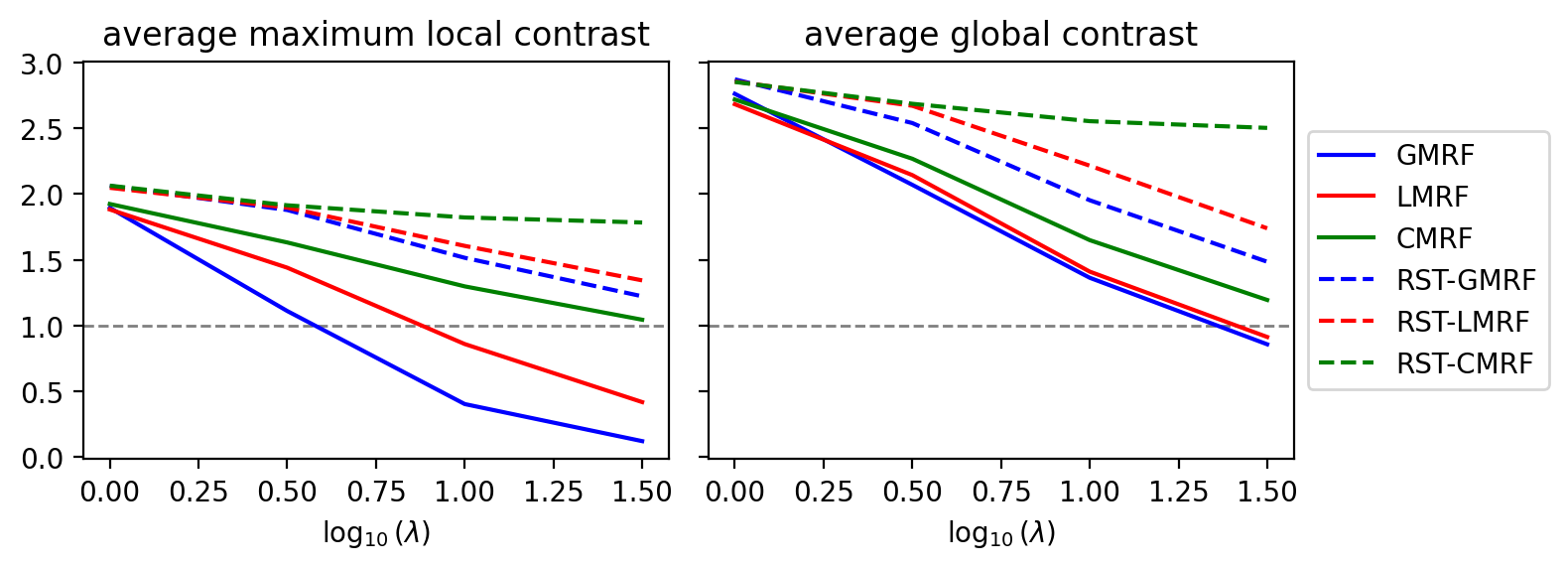}
    \caption{Average contrasts in the denoising problem for the various prior models at various strengths. Maximum local contrast refers to the largest absolute difference between neighbouring pixels. Global contrast refers to the difference between the largest and smallest value. The dashed gray line at $1.0$ corresponds to the ground truth.}
    \label{fig:denoising_contrast}
\end{figure}

To highlight the effect that the RST hyperprior has on the contrast in the images, Figure \ref{fig:denoising_contrast} shows two contrast measures as function of the prior strength $\lambda$ for all six considered priors. In all of the cases the standard MRF priors have lower contrast compared to their RST variants and heavier tailed distributions have higher contrast as expected.

\subsection{Deblurring}

\begin{figure}[htbp]
\centering

% Row 1
\begin{subfigure}{0.48\textwidth}
    \centering
    \includegraphics[width=\linewidth]{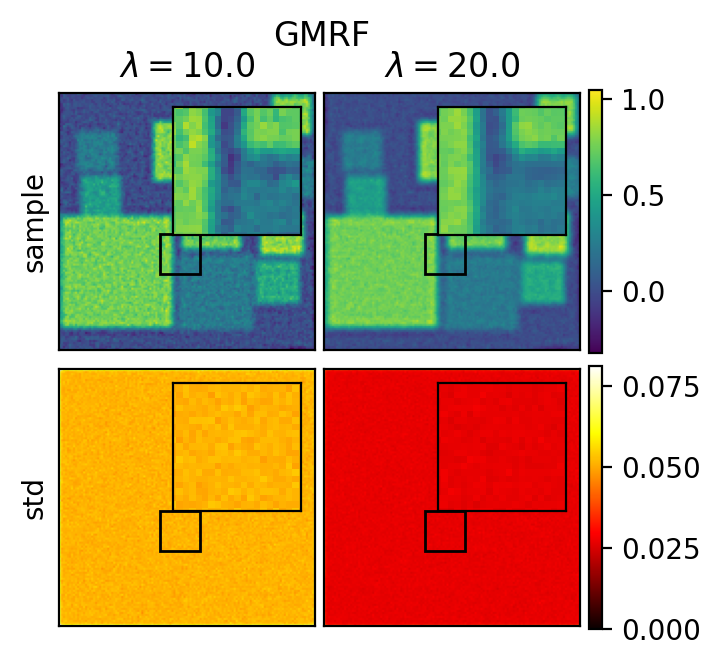}
\end{subfigure}
\hfill
\begin{subfigure}{0.48\textwidth}
    \centering
    \includegraphics[width=\linewidth]{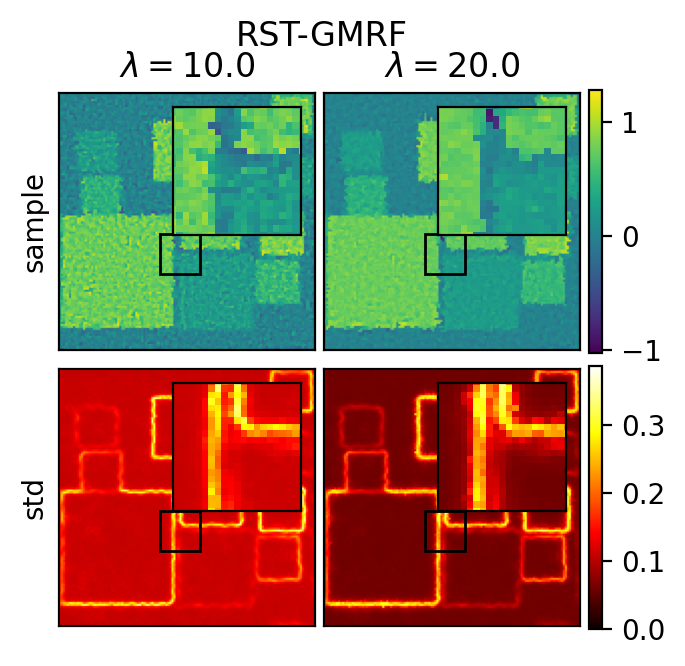}
\end{subfigure}

%\vspace{0.5em}

% Row 2
\begin{subfigure}{0.48\textwidth}
    \centering
    \includegraphics[width=\linewidth]{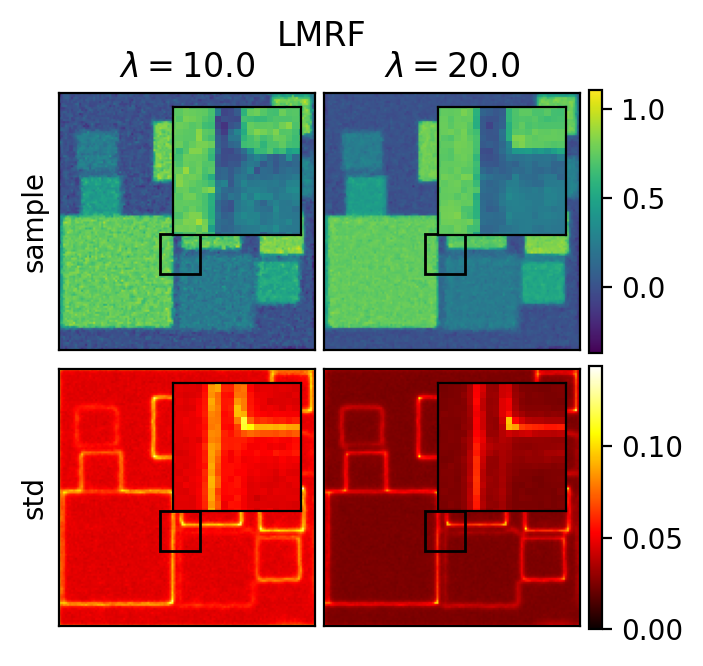}
\end{subfigure}
\hfill
\begin{subfigure}{0.48\textwidth}
    \centering
    \includegraphics[width=\linewidth]{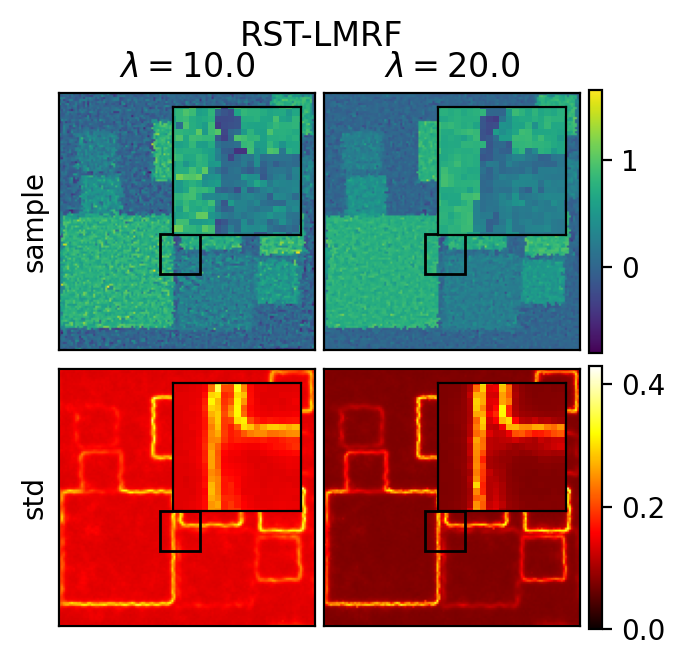}
\end{subfigure}

%\vspace{0.5em}

% Row 3
\begin{subfigure}{0.48\textwidth}
    \centering
    \includegraphics[width=\linewidth]{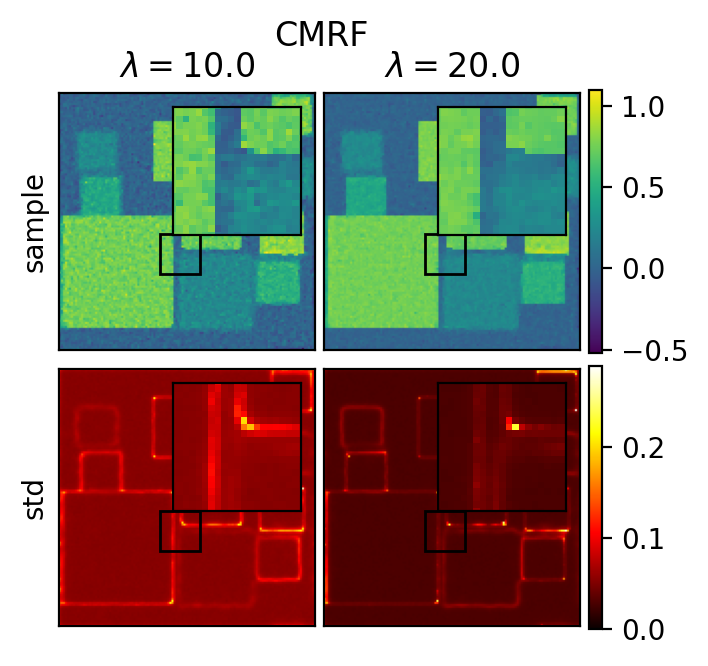}
\end{subfigure}
\hfill
\begin{subfigure}{0.48\textwidth}
    \centering
    \includegraphics[width=\linewidth]{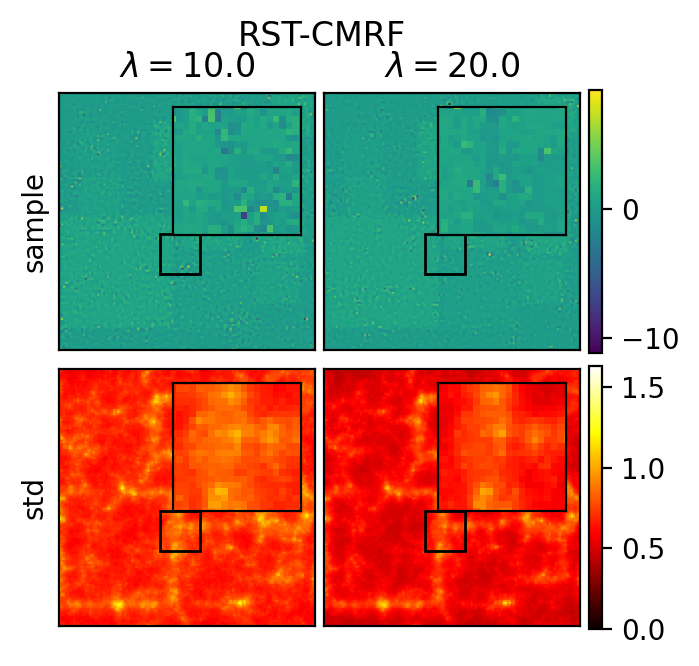}
\end{subfigure}

\caption{Samples and componentwise standard deviations for the various MRFs and RST-MRFs.}
\label{fig:deblurring_results}
\end{figure}

\begin{figure}[htbp]
\centering

% Row 1
\begin{subfigure}{0.48\textwidth}
    \centering
    \includegraphics[width=\linewidth]{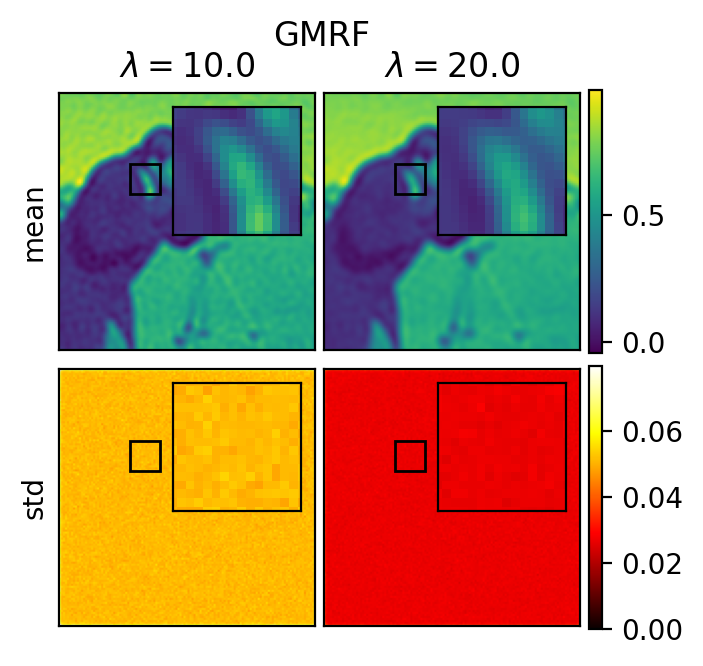}
\end{subfigure}
\hfill
\begin{subfigure}{0.48\textwidth}
    \centering
    \includegraphics[width=\linewidth]{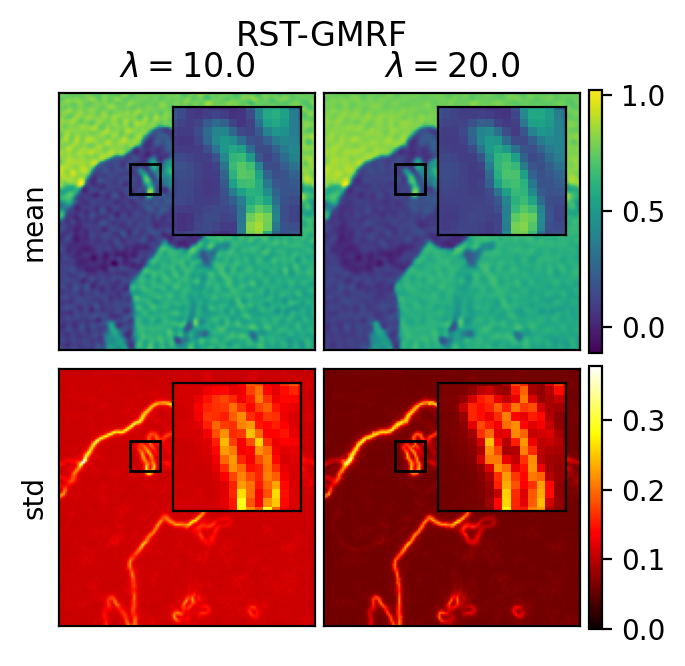}
\end{subfigure}

%\vspace{0.5em}

% Row 2
\begin{subfigure}{0.48\textwidth}
    \centering
    \includegraphics[width=\linewidth]{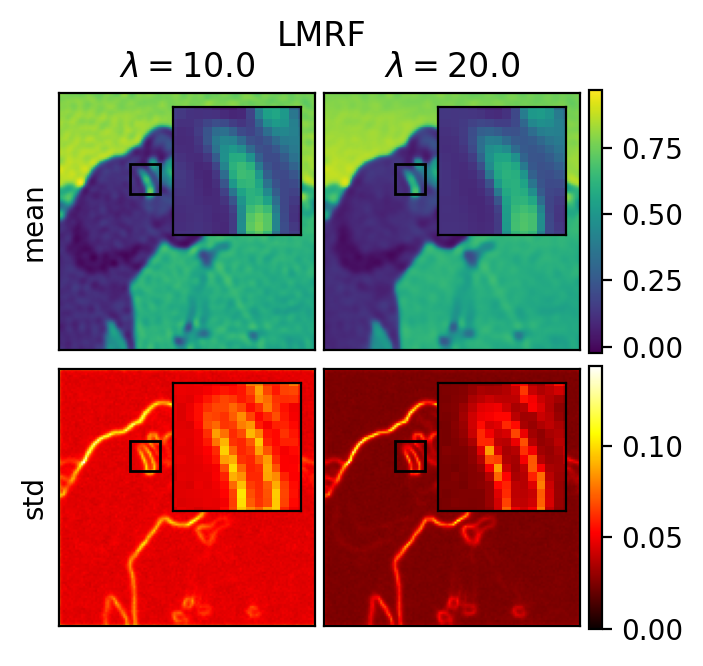}
\end{subfigure}
\hfill
\begin{subfigure}{0.48\textwidth}
    \centering
    \includegraphics[width=\linewidth]{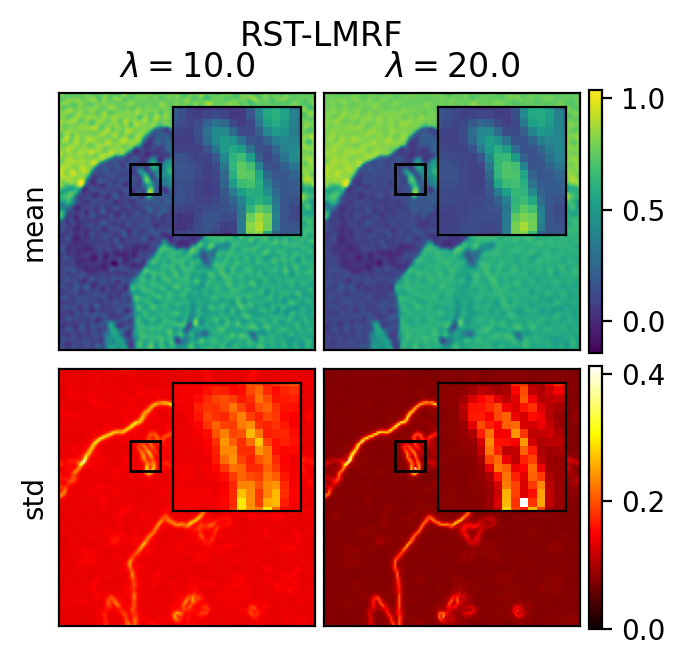}
\end{subfigure}

%\vspace{0.5em}

% Row 3
\begin{subfigure}{0.48\textwidth}
    \centering
    \includegraphics[width=\linewidth]{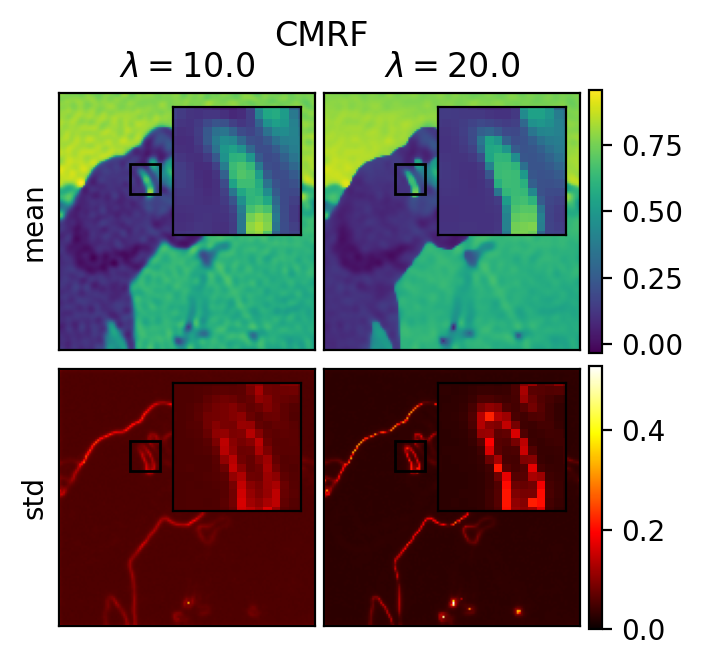}
\end{subfigure}
\hfill
\begin{subfigure}{0.48\textwidth}
    \centering
    \includegraphics[width=\linewidth]{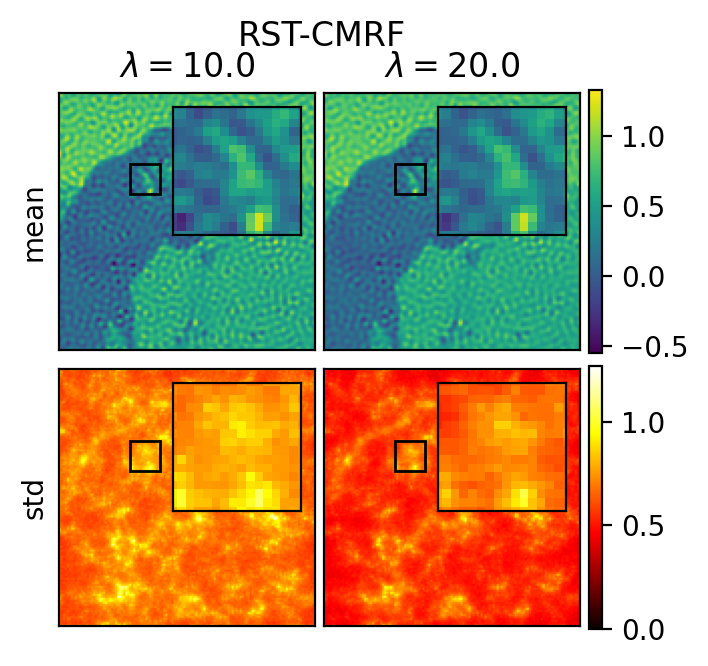}
\end{subfigure}

\caption{Componentwise means and standard deviations for the various MRFs and RST-MRFs.}
\label{fig:deblurring_results_2}
\end{figure}

Figures \ref{fig:deblurring_results} and \ref{fig:deblurring_results_2} show samples/mean and standard deviations of the deblurring problems with all six prior models at multiple prior strengths. The RST-CMRF prior did not result in any useful results at various prior strengths and algorithmic parameters. Whilst the ground truth in Figure \ref{fig:deblurring-data} clearly shows straight edges for the first problem, it is reasonable to expect some jagged edges when looking at the blurred edges. As seen in Figure \ref{fig:deblurring_results}, the results for the GMRF, LMRF and CMRF show a larger tendency towards straight edges, whilst the RST-GMRF and RST-LMRF priors naturally have some roughness in their samples. This roughness appears in the samples, but can average out in the mean and look smooth again, though with increased uncertainty in the standard deviation around the edges. This effect is shown in Figure \ref{fig:deblurring_results_2}.

\begin{figure}
    \centering
    \begin{subfigure}{0.32\textwidth}
        \centering
        \includegraphics[width=\linewidth]{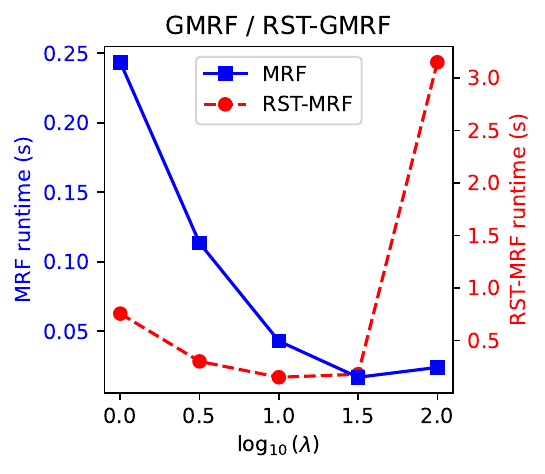}
    \end{subfigure}
    \hfill
    \begin{subfigure}{0.32\textwidth}
        \centering
        \includegraphics[width=\linewidth]{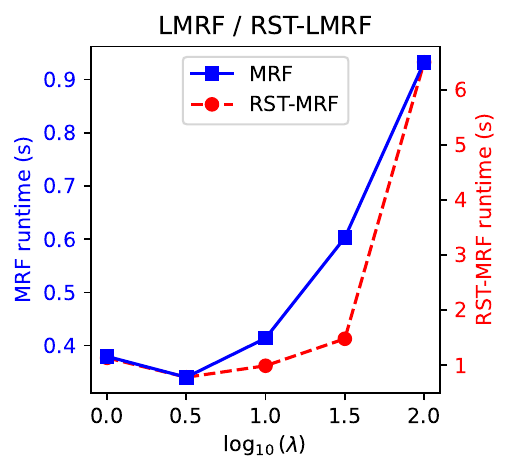}
    \end{subfigure}
    \hfill
    \begin{subfigure}{0.32\textwidth}
        \centering
        \includegraphics[width=\linewidth]{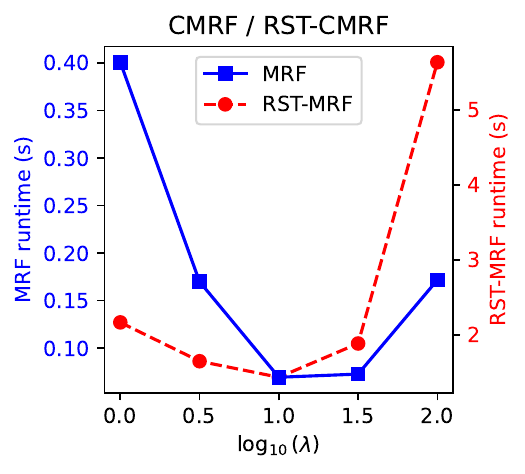}
    \end{subfigure}
    \caption{Average runtime per sample for the first deblurring experiment at different prior strengths $\lambda$. Note the different y-axis for the MRF and RST-MRF priors.}
    \label{fig:deblurring:runtime}
\end{figure}

The deblurring problem is inherently ill-conditioned. However, when using Linear-RTO for sampling from (conditional) Gaussian posterior distributions, the prior can greatly improve the conditioning of the system and improve the efficiency of the conjugate gradient solver. Figure \ref{fig:deblurring:runtime} shows the runtime for all the proposed priors at various prior strengths. We do not consider preconditioning and the cost of sampling random weighted spanning trees is only a small part of the computation time, hence the runtime serves as a proxy for the conditioning of the linear system solved by CG. The MRF priors are frequently faster than the RST-MRF priors, mainly due to the additional edges in the MRF priors improving the conditioning of the linear system and the additional overhead of running Gibbs samplers. Laplace-based priors are generally slower than the others, both the LMRF and RST-LMRF. This is mainly caused by the scale-mixture representation causing a mixture of very small and large values, resulting in a conditional Gaussian that itself can again be quite ill-conditioned, unlike the CMRF and RST-CMRF priors, where the values have a large spread, but are not clustered at very low values. Comparing the visual quality to the runtime of the RST-MRF priors suggests that the RST-GMRF offers the best runtime among the RST-MRF priors while retaining the random spanning tree behaviour.

\subsection{Inpainting}

\begin{figure}[htbp]
\centering

% Row 1
\begin{subfigure}{0.48\textwidth}
    \centering
    \includegraphics[width=\linewidth]{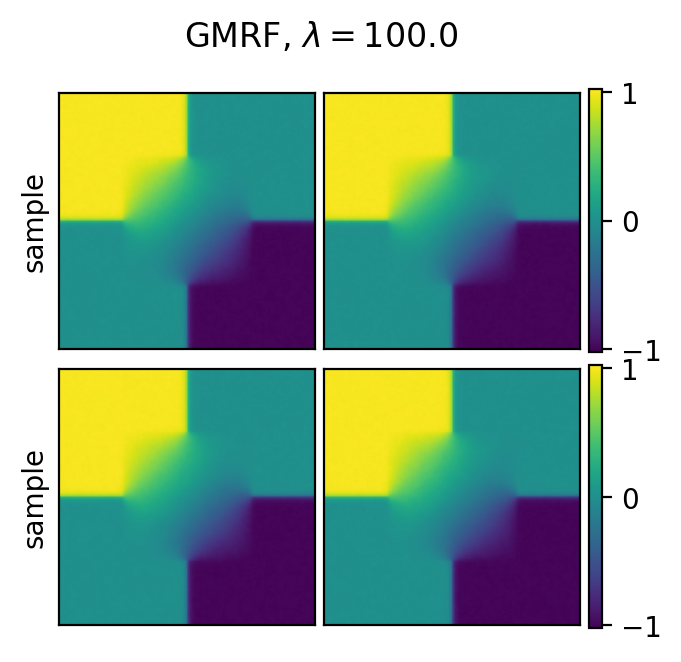}
\end{subfigure}
\hfill
\begin{subfigure}{0.48\textwidth}
    \centering
    \includegraphics[width=\linewidth]{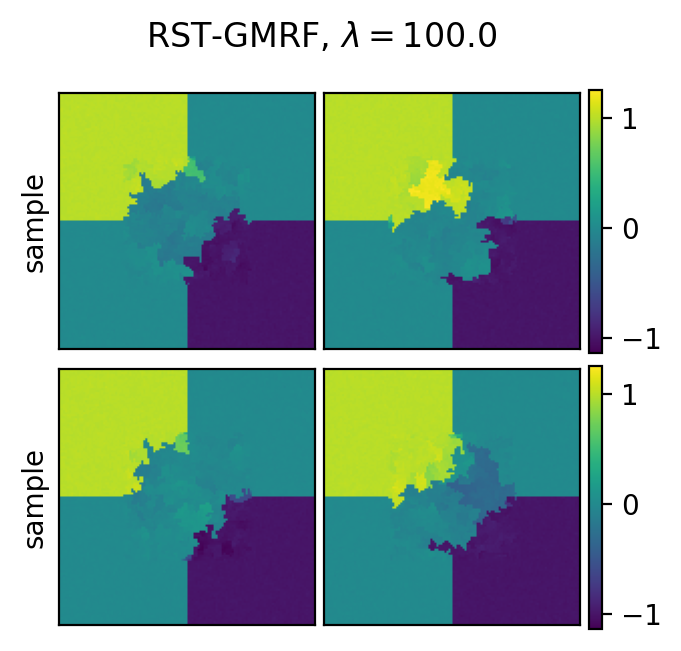}
\end{subfigure}

%\vspace{0.5em}

% Row 2
\begin{subfigure}{0.48\textwidth}
    \centering
    \includegraphics[width=\linewidth]{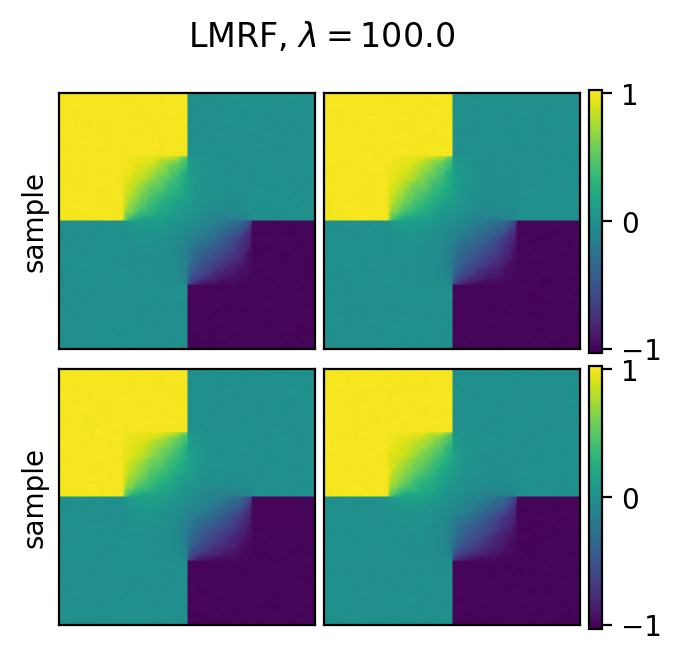}
\end{subfigure}
\hfill
\begin{subfigure}{0.48\textwidth}
    \centering
    \includegraphics[width=\linewidth]{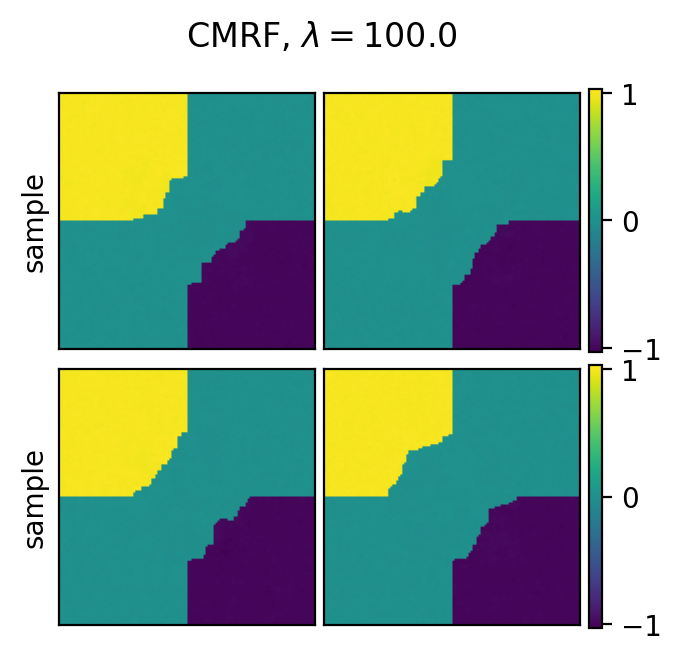}
\end{subfigure}

\caption{Results of various MRFs and RST-MRFs for the inpainting problem.}
\label{fig:inpainting_results}
\end{figure}

Figure \ref{fig:inpainting_results} shows samples of the inpainting problem with GMRF, LMRF, CMRF and RST-GMRF. The RST-LMRF and RST-CMRF samplers are not considered because their numerical instability. The edge-preserving CMRF, RST-GMRF both try to find some curve between the disconnected boundaries of the low and high intensity regions. For the CMRF, at sufficiently large prior strength, the most likely curves are monotonic from one corner to the other, such that the jumps happen over the least number of edges. Whilst in the case of total variation regularization, this would be a short line, here any monotonic curve would work, with the piecewise constant regions resulting in almost insignificant weight compared to the edges. For the RST-GMRF, at most one edge along the connecting curve is generally penalized, often none due to the positive termination weight. Hence any connecting curve will be equally likely, with fractal-like interfaces due to the nature of loop-erased random walks as discussed around Theorem \ref{thm:fractal_interface}.

%\input{tex_figures/inpainting_traces}
%Although the GMRF and LMRF result in log-concave posterior distributions under standard Gaussian likelihood assumptions, the CMRF, RST-GMRF and RST-LMRF are generally not log-concave and the samplers will easily get stuck in the mode corresponding to a particular boundary interface. The Gibbs sampler is theoretically convergent, but it is unlikely to escape these modes and it is not expected to be able to explore the full posterior in any reasonable amount of computation time. Moreover, the sampling algorithm might prefer to keep certain nearly piecewise constant regions connected, increasing the termination weight for the RST priors reduces the connectivity within these regions and makes the sampler discover different boundaries at the cost of reducing some of the stability of the algorithm. Especially for the inpainting problem, if a connected component in the random spanning forest is completely contained within the removed region, that component can drift to almost any value as it is no longer connected to any known value on the boundary. This becomes increasingly more likely as the termination weight increases, increasing the chance of instability. Figure \ref{fig:inpainting_traces} shows the trace of the sampler for the RST-GMRF and RST-LMRF priors at five different locations for different termination weights $\rho$. It clearly shows that increasing the termination weight $\rho$ helps exploration at the cost of stability.

\section{Conclusions}\label{sec:conclusions}

In this work, we presented random spanning trees as a hyperprior on the graph structure of multiple Markov random field priors on two-dimensional images. This combination gives rise to prior distributions with fractal-like interfaces within the image samples, whilst not over-smoothing other parts of the image. We propose an efficient Gibbs sampler for exploring the posterior distributions using this prior, although the multi-modal nature of the prior does give the sampler the tendency to get stuck in modes. Whilst a spanning tree provides the lowest number of edges required for making the graph connected, this reduction impacts the stability of sampling and the noise-reducing effect of Markov random fields. This computational issue is especially the case for the random spanning tree hyperprior combined with the Laplace and Cauchy Markov random fields, where the optimization problem we use for sampling from the unknown image is severely ill-conditioned.

The focus of this work has been on sampling and uncertainty quantification. Whilst mathematical convergence of the chain generated by the Gibbs sampler is guaranteed, the multi-modal nature of the posterior distributions makes full exploration of the posterior still problematic. Similar issues would arise in estimating the maximum a posterior (MAP) estimate using block coordinate descent algorithms, which does not have the convergence guarantee that the Gibbs sampler has. Alternative methods based on joint optimization of the continuous and discrete variables, such as through mixed-integer quadratic programming \cite{lazimy1982mixed}, could be studied to improve efficiency. Finding approaches for improving the stability of sampling could also be beneficial for scaling the method to higher resolutions, where the current bottleneck lies in sampling from the continuous random image.

The random spanning tree Markov random field priors have the benefit of efficient sampling made possible in the finite-dimensional setting. The fractal-like interfaces appearing in the discretization limit make the limiting distribution an interesting direction for future research. For example, it is known that the LMRF is not edge-preserving when the discretization is sufficiently fine and an infinite dimensional total variation prior is not known to exist  \cite{lassas2004can}. The limiting distribution of random spanning trees has been extensively studied in the literature, see e.g., \cite{schramm2009contour, lawler2011conformal}, but the limiting behaviour of the prior and posterior when incorporated into random spanning tree Markov random field priors is not yet known.

\section*{Acknowledgments}This work was supported by the Research Council of Finland Flagship of Advanced Mathematics for Sensing, Imaging and Modelling (Grant No. 358944). Initial work started as part of the CUQI project which is supported by The Villum Foundation (Grant No. 25893).

\bibliographystyle{abbrv}
\bibliography{references}

@article{lassas2004can,
  title={Can one use total variation prior for edge-preserving {Bayesian} inversion?},
  author={Lassas, Matti and Siltanen, Samuli},
  journal={Inverse problems},
  volume={20},
  number={5},
  pages={1537},
  year={2004},
  publisher={IOP Publishing}
}

@article{kekkonen2023random,
  title={Random tree {Besov} priors--Towards fractal imaging},
  author={Kekkonen, Hanne and Lassas, Matti and Saksman, Eero and Siltanen, Samuli},
  journal={Inverse Problems and Imaging},
  volume={17},
  number={2},
  pages={507--531},
  year={2023},
  publisher={Inverse Problems and Imaging}
}

@inproceedings{wilson1996generating,
  title={Generating random spanning trees more quickly than the cover time},
  author={Wilson, David Bruce},
  booktitle={Proceedings of the twenty-eighth annual ACM symposium on Theory of computing},
  pages={296--303},
  year={1996}
}

@article{propp1998get,
  title={How to get a perfectly random sample from a generic {Markov} chain and generate a random spanning tree of a directed graph},
  author={Propp, James Gary and Wilson, David Bruce},
  journal={Journal of Algorithms},
  volume={27},
  number={2},
  pages={170--217},
  year={1998},
  publisher={Elsevier}
}

@book{lyons2017probability,
  title={Probability on trees and networks},
  author={Lyons, Russell and Peres, Yuval},
  volume={42},
  year={2017},
  publisher={Cambridge University Press}
}

@article{prim1957shortest,
  title={Shortest connection networks and some generalizations},
  author={Prim, Robert Clay},
  journal={The Bell System Technical Journal},
  volume={36},
  number={6},
  pages={1389--1401},
  year={1957},
  publisher={Nokia Bell Labs}
}

@article{kruskal1956shortest,
  title={On the shortest spanning subtree of a graph and the traveling salesman problem},
  author={Kruskal, Joseph B},
  journal={Proceedings of the American Mathematical society},
  volume={7},
  number={1},
  pages={48--50},
  year={1956},
  publisher={JSTOR}
}

@article{uribe2023horseshoe,
  title={Horseshoe priors for edge-preserving linear {Bayesian} inversion},
  author={Uribe, Felipe and Dong, Yiqiu and Hansen, Per Christian},
  journal={SIAM Journal on Scientific Computing},
  volume={45},
  number={3},
  pages={B337--B365},
  year={2023},
  publisher={SIAM}
}

@incollection{lawler2011conformal,
  title={Conformal invariance of planar loop-erased random walks and uniform spanning trees},
  author={Lawler, Gregory F and Schramm, Oded and Werner, Wendelin},
  booktitle={Selected Works of Oded Schramm},
  pages={931--987},
  year={2011},
  publisher={Springer}
}

@article{suuronen2022cauchy,
  title={{Cauchy} {Markov} random field priors for {Bayesian} inversion},
  author={Suuronen, Jarkko and Chada, Neil K and Roininen, Lassi},
  journal={Statistics and computing},
  volume={32},
  number={2},
  pages={33},
  year={2022},
  publisher={Springer}
}

@article{kolmogorov2016total,
  title={Total variation on a tree},
  author={Kolmogorov, Vladimir and Pock, Thomas and Rolinek, Michal},
  journal={SIAM Journal on Imaging Sciences},
  volume={9},
  number={2},
  pages={605--636},
  year={2016},
  publisher={SIAM}
}

@article{siltanen2003statistical,
  title={Statistical inversion for medical x-ray tomography with few radiographs: I. {General} theory},
  author={Siltanen, Samuli and Kolehmainen, Ville and J{\"a}rvenp{\"a}{\"a}, Seppo and Kaipio, Jari P and Koistinen, Petri and Lassas, Matti and Pirttil{\"a}, Juha and Somersalo, Erkki},
  journal={Physics in Medicine \& Biology},
  volume={48},
  number={10},
  pages={1437},
  year={2003},
  publisher={IOP Publishing}
}

@article{markkanen2019cauchy,
  title={{Cauchy} difference priors for edge-preserving {Bayesian} inversion},
  author={Markkanen, Markku and Roininen, Lassi and Huttunen, Janne MJ and Lasanen, Sari},
  journal={Journal of Inverse and Ill-posed Problems},
  volume={27},
  number={2},
  pages={225--240},
  year={2019}
}

@article{lazimy1982mixed,
  title={Mixed-integer quadratic programming},
  author={Lazimy, Rafael},
  journal={Mathematical Programming},
  volume={22},
  number={1},
  pages={332--349},
  year={1982},
  publisher={Springer}
}

@book{rue2005gaussian,
  title={{Gaussian} {Markov} random fields: theory and applications},
  author={Rue, Havard and Held, Leonhard},
  year={2005},
  publisher={Chapman and Hall/CRC}
}

@article{duan2023bayesian,
  title={{Bayesian} spanning tree: estimating the backbone of the dependence graph},
  author={Duan, Leo L and Dunson, David B},
  journal={Journal of Machine Learning Research},
  volume={24},
  number={397},
  pages={1--44},
  year={2023}
}

@article{schreck2015shrinkage,
  title={A shrinkage-thresholding {Metropolis} adjusted {Langevin} algorithm for {Bayesian} variable selection},
  author={Schreck, Amandine and Fort, Gersende and Le Corff, Sylvain and Moulines, Eric},
  journal={IEEE Journal of Selected Topics in Signal Processing},
  volume={10},
  number={2},
  pages={366--375},
  year={2015},
  publisher={IEEE}
}

@article{hans2007shotgun,
  title={Shotgun stochastic search for “large p” regression},
  author={Hans, Chris and Dobra, Adrian and West, Mike},
  journal={Journal of the American Statistical Association},
  volume={102},
  number={478},
  pages={507--516},
  year={2007},
  publisher={Taylor \& Francis}
}

@book{kaipio2005statistical,
  title={Statistical and computational inverse problems},
  author={Kaipio, Jari P and Somersalo, Erkki},
  year={2005},
  publisher={Springer}
}

@article{kekkonen2026random,
  title={Random tree {Besov} priors: {Data-driven} regularisation parameter selection},
  author={Kekkonen, Hanne and Tataris, Andreas},
  journal={arXiv preprint arXiv:2601.12957},
  year={2026}
}

@article{horst2025uncertainty,
  title={Uncertainty Quantification for Linear Inverse Problems with {Besov} Prior: A Randomize-Then-Optimize Method},
  author={Horst, Andreas and Maboudi Afkham, Babak and Dong, Yiqiu and Lemvig, Jakob},
  journal={Statistics and Computing},
  volume={35},
  number={4},
  pages={101},
  year={2025},
  publisher={Springer}
}

@article{brown2021sampling,
  title={Sampling strategies for fast updating of {Gaussian} {Markov} random fields},
  author={Brown, D Andrew and McMahan, Christopher S and Watson Self, Stella},
  journal={The American Statistician},
  volume={75},
  number={1},
  pages={52--65},
  year={2021},
  publisher={Taylor \& Francis}
}

@article{bardsley2012laplace,
  title={{Laplace-distributed} increments, the {Laplace} prior, and edge-preserving regularization.},
  author={Bardsley, Johnathan M},
  journal={Journal of Inverse \& Ill-Posed Problems},
  volume={20},
  number={3},
  year={2012}
}

@article{bardsley2012mcmc,
  title={MCMC-based image reconstruction with uncertainty quantification},
  author={Bardsley, Johnathan M},
  journal={SIAM Journal on Scientific Computing},
  volume={34},
  number={3},
  pages={A1316--A1332},
  year={2012},
  publisher={SIAM}
}

@article{sheffield2007gaussian,
  title={{Gaussian} free fields for mathematicians},
  author={Sheffield, Scott},
  journal={Probability theory and related fields},
  volume={139},
  number={3},
  pages={521--541},
  year={2007},
  publisher={Springer}
}

@article{schramm2009contour,
  title={Contour lines of the two-dimensional discrete {Gaussian} free field},
  author={Schramm, Oded and Sheffield, Scott},
  journal={Acta Math},
  volume={202},
  pages={21--137},
  year={2009}
}

@article{senchukova2024bayesian,
  title={{Bayesian} inversion with {Students’ t} priors based on {Gaussian} scale mixtures},
  author={Senchukova, Angelina and Uribe, Felipe and Roininen, Lassi},
  journal={Inverse Problems},
  volume={40},
  number={10},
  pages={105013},
  year={2024},
  publisher={IOP Publishing}
}

@article{tam2025exact,
  title={Exact sampling of spanning trees via fast-forwarded random walks},
  author={Tam, Edric and Dunson, David B and Duan, Leo L},
  journal={Biometrika},
  volume={112},
  number={2},
  pages={asaf031},
  year={2025},
  publisher={Oxford University Press}
}

@inproceedings{aleliunas1979random,
  title={Random walks, universal traversal sequences, and the complexity of maze problems},
  author={Aleliunas, Romas and Karp, Richard M and Lipton, Richard J and Lov{\'a}sz, L{\'a}szl{\'o} and Rackoff, Charles},
  booktitle={Proceedings of the 20th Annual Symposium on Foundations of Computer Science},
  pages={218--223},
  year={1979}
}

@article{aldous1989lower,
  title={Lower bounds for covering times for reversible {Markov} chains and random walks on graphs},
  author={Aldous, David J},
  journal={Journal of Theoretical Probability},
  volume={2},
  number={1},
  pages={91--100},
  year={1989},
  publisher={Springer}
}

@misc{aldous-fill-2014,
    AUTHOR = {Aldous, David and Fill, James Allen},
     TITLE = {Reversible Markov Chains and Random Walks on Graphs},
      YEAR = {2002},
      NOTE = {Unfinished monograph, recompiled 2014, available 
      at \url{http://www.stat.berkeley.edu/$\sim$aldous/RWG/book.html}}
      }

@article{shante1971introduction,
  title={An introduction to percolation theory},
  author={Shante, Vinod KS and Kirkpatrick, Scott},
  journal={Advances in Physics},
  volume={20},
  number={85},
  pages={325--357},
  year={1971},
  publisher={Taylor \& Francis}
}

@article{vono2022high,
  title={High-dimensional {Gaussian} sampling: A review and a unifying approach based on a stochastic proximal point algorithm},
  author={Vono, Maxime and Dobigeon, Nicolas and Chainais, Pierre},
  journal={SIAM Review},
  volume={64},
  number={1},
  pages={3--56},
  year={2022},
  publisher={SIAM}
}

@article{gelman2006prior,
  title={Prior distributions for variance parameters in hierarchical models (comment on article by {Browne} and {Draper})},
  author={Gelman, Andrew},
  journal={Bayesian Analysis},
  volume={1},
  pages={515--533},
  year={2006}
}

@article{flock2025continuous,
  title={Continuous {Gaussian} mixture solution for linear {Bayesian} inversion with application to {Laplace} priors},
  author={Flock, Rafael and Dong, Yiqiu and Uribe, Felipe and Zahm, Olivier},
  journal={Inverse Problems},
  volume={41},
  number={6},
  pages={065012},
  year={2025},
  publisher={IOP Publishing}
}

@ARTICLE{2020SciPy-NMeth,
  author  = {Virtanen, Pauli and Gommers, Ralf and Oliphant, Travis E. and
            Haberland, Matt and Reddy, Tyler and Cournapeau, David and
            Burovski, Evgeni and Peterson, Pearu and Weckesser, Warren and
            Bright, Jonathan and {van der Walt}, St{\'e}fan J. and
            Brett, Matthew and Wilson, Joshua and Millman, K. Jarrod and
            Mayorov, Nikolay and Nelson, Andrew R. J. and Jones, Eric and
            Kern, Robert and Larson, Eric and Carey, C J and
            Polat, {\.I}lhan and Feng, Yu and Moore, Eric W. and
            {VanderPlas}, Jake and Laxalde, Denis and Perktold, Josef and
            Cimrman, Robert and Henriksen, Ian and Quintero, E. A. and
            Harris, Charles R. and Archibald, Anne M. and
            Ribeiro, Ant{\^o}nio H. and Pedregosa, Fabian and
            {van Mulbregt}, Paul and {SciPy 1.0 Contributors}},
  title   = {{{SciPy} 1.0: Fundamental Algorithms for Scientific
            Computing in {Python}}},
  journal = {Nature Methods},
  year    = {2020},
  volume  = {17},
  pages   = {261--272},
  adsurl  = {https://rdcu.be/b08Wh},
  doi     = {10.1038/s41592-019-0686-2},
}

@article{park2008bayesian,
  title={The {Bayesian} lasso},
  author={Park, Trevor and Casella, George},
  journal={Journal of the american statistical association},
  volume={103},
  number={482},
  pages={681--686},
  year={2008},
  publisher={Taylor \& Francis}
}

@article{bekas2007estimator,
  title={An estimator for the diagonal of a matrix},
  author={Bekas, Costas and Kokiopoulou, Effrosyni and Saad, Yousef},
  journal={Applied numerical mathematics},
  volume={57},
  number={11-12},
  pages={1214--1229},
  year={2007},
  publisher={Elsevier}
}

@article{aldous1990random,
  title={The random walk construction of uniform spanning trees and uniform labelled trees},
  author={Aldous, David J},
  journal={SIAM Journal on Discrete Mathematics},
  volume={3},
  number={4},
  pages={450--465},
  year={1990},
  publisher={SIAM}
}

@inproceedings{broder1989generating,
  title={Generating random spanning trees},
  author={Broder, Andrei Z},
  booktitle={FOCS},
  volume={89},
  pages={442--447},
  year={1989}
}

@article{beffara2008dimension,
  title={The Dimension of the {SLE} Curves},
  author={Beffara, Vincent},
  journal={The Annals of Probability},
  pages={1421--1452},
  year={2008},
  publisher={JSTOR}
}

\end{document}